\definecolor{fraunhofergreen}{RGB}{23,156,125} 
\newcommand\numberthis{\addtocounter{equation}{1}\tag{\theequation}}
\newtheorem{theorem}{Theorem}
\newtheorem{lemma}{Lemma}
\newcommand{\doublewidetilde}[1]{{%
  \mathpalette\double@widetilde{#1}%
}}
\newcommand{\double@widetilde}[2]{%
  \sbox\z@{$\m@th#1\widetilde{#2}$}%
  \ht\z@=.85\ht\z@
  \widetilde{\box\z@}%
}
\renewcommand{\i}{\mathrm{i}}
\newcommand{\C}{\mathbb{C}}
\newcommand{\N}{\mathbb{N}}
\newcommand{\R}{\mathbb{R}}
\newcommand{\numDim}{D}
\newcommand{\domain}{\Omega}
\newcommand{\laplace}{L}
\newcommand{\laplaceOne}{\laplace_{1}}
\newcommand{\laplaceOneh}{\laplace_{1,h}}
\newcommand{\laplaceOnehScaled}{\widetilde{\laplace}_{1,h}}
\newcommand{\laplaceTwohScaled}{\widetilde{\laplace}_{2,h}}
\newcommand{\laplaceThreehScaled}{\widetilde{\laplace}_{3,h}}
\newcommand{\laplaceD}{\laplace_{\numDim}}
\newcommand{\laplaceDh}{\laplace_{\numDim, h}}
\newcommand{\laplaceDhScaled}{\widetilde{\laplace}_{\numDim, h}}
\newcommand{\numDofOne}{N}
\newcommand{\DOneh}{D_{1,h}}
\newcommand{\DOnehScaled}{\widetilde{D}_{1,h}}
\newcommand{\DOnehScaledTwo}{\doublewidetilde{D}_{1,h}}
\newcommand{\DOnehScaledx}{\widetilde{D}_{1,h}^{(0)}}
\newcommand{\DOnehScaledy}{\widetilde{D}_{1,h}^{(1)}}
\newcommand{\SOneh}{Q_{1,h}}
\newcommand{\SOnehScaled}{\widetilde{Q}_{1,h}}
\newcommand{\numDofQC}{N}
\newcommand{\numDof}{N_{\numDim}}
\newcommand{\numSytemBits}{n}
\newcommand{\numAncillaDof}{M}
\newcommand{\numAncillaBits}{m}
\newcommand{\numAncillaKDof}{\widehat{D}}
\newcommand{\numAncillaKBits}{\widehat{d}}
\newcommand{\vecv}{\mathbf{v}}
\newcommand{\vecw}{\mathbf{w}}
\newcommand{\vecDomain}{\bm{\domain}}
\newcommand{\vecx}{\mathbf{x}}
\newcommand{\jind}[1]{j^{(#1)}}
\newcommand{\Shift}[1]{S^{(#1)}}
\newcommand{\ShiftPlusMinus}{S_{\pm}}
\newcommand{\ShiftMinus}{S_{\boldsymbol{-}}}
\newcommand{\ShiftPlus}{S_{\boldsymbol{+}}}
\newcommand{\scalingFactor}{\alpha_{\numDim}}
\newcommand{\scalingFactorBE}{\alpha}
\newcommand{\vecxentry}[1]{x^{(#1)}}
\newcommand{\lambdaMax}[1]{\lambda_{#1, \mathrm{max}}}
\newcommand{\bit}[1]{\texttt{#1}}
\newcommand{\ketSimple}[1]{| #1 \rangle}
\newcommand{\ket}[1]{\left| #1 \right\rangle}
\newcommand{\ketBit}[1]{\left| \bit{#1} \right\rangle}
\newcommand{\bra}[1]{\left\langle #1 \right|}
\newcommand{\norm}[1]{\|#1\|}
\newcommand{\pmat}[1]{\begin{pmatrix} #1 \end{pmatrix}}
\newcommand{\gridPlotWidth}{4.5cm}%
\newcommand{\heatmapPlotWidth}{6.0cm}%
\begin{document}

\title{Efficient and Explicit Block Encoding of Finite Difference Discretizations of the Laplacian}

\author{Andreas Sturm}
\affiliation{
    Fraunhofer-Institut für Arbeitswirtschaft und Organisation IAO,
	Nobelstra{\ss}e 12,
	70569 Stuttgart,
	Germany}
\email{andreas.sturm@iao.fraunhofer.de}
\author{Niclas Schillo}
\affiliation{
    Fraunhofer-Institut für Arbeitswirtschaft und Organisation IAO,
	Nobelstra{\ss}e 12,
	70569 Stuttgart,
	Germany}
\affiliation{
    Universit\"at Stuttgart Institut für Arbeitswissenschaft und Technologiemanagement IAT,
    Nobelstra{\ss}e 12,
    70569 Stuttgart,
    Germany}
\email{niclas.schillo@iao.fraunhofer.de}
\maketitle

\section*{Abstract}
The data input model is a fundamental component of every quantum algorithm, as its efficiency is crucial for achieving potential speed-ups over classical methods. For quantum linear algebra tasks that utilize quantum eigenvalue or singular value transformations, block encoding is the established technique for accessing matrix data. A key application of this is solving partial differential equations, where the Laplacian operator and its finite difference discretization serve as foundational examples. In this paper, we present an efficient and explicit block encoding method that enhances existing approaches in key aspects. We detail the construction of the quantum algorithm and illustrate how it leverages the unique structure of finite difference discretizations. Furthermore, we analytically derive the scaling of the sub-normalization factor and of the success probability of the block encoding with respect to the problem dimension, the grid width of the finite difference grid and the regularity of the exact solution, and we give resource estimates.

\section{Introduction}
Partial differential equations (PDEs) are fundamental in modeling a wide array of phenomena across various scientific disciplines such as physics, engineering, and finance. Among the various operators used in the formulation of PDEs, the Laplacian operator holds a prominent position as it appears in key equations, including the Poisson, heat, and wave equation \cite{Evans.2022}. These PDEs serve as foundational examples of elliptic, parabolic, and hyperbolic differential equations. Their study is not only the basis for understanding more complex equations, but also the development of numerical methods is usually based on these foundational equations.

The most prominent numerical scheme to approximate the Laplacian is the finite difference method \cite{Smith.1993,Thomas.1995}. This method discretizes the continuous domain by estimating derivatives using values at discrete grid points, resulting in a discrete representation of the Laplacian as a sparse, banded matrix. Numerical linear algebra techniques are then employed to approximate the exact solution of the PDE, typically involving matrix functions calculated as polynomials or rational functions of the discretized Laplacian \cite{Hairer.1991,Sogabe.2022}.  However, the size of the discretized Laplacian scales with  $\mathcal{O}(\numDof)$, where $\numDof = \numDofOne^\numDim$ depends on the number of grid points $\numDofOne$ in each spatial dimension  and the number of dimensions $\numDim$. This growth ultimately renders any scheme intractable for classical computers.

The emergence of quantum computing offers new possibilities for addressing these computational challenges. Quantum computers can store a vector of dimension $\numDof$ in the amplitudes of just 
$\log\numDof$ qubits, potentially alleviating the memory limitations faced by classical systems. However, to leverage this advantage, the discrete Laplacian must be efficiently encoded onto a quantum computer. Given that quantum computing relies solely on unitary transformations, this encoding involves embedding the matrix $A$ into a larger unitary matrix, a process known as block encoding \cite{Low.2019,Gilyen.2019}. Previous works such as \cite{Berry.2007,Berry.2015,Childs.2017,Low.2019,Gilyen.2019} have explored the use of block encoding of specific sparse matrices within larger algorithms. But the block encoding there is based on oracles that provide black box access to the non-zero pattern and the entries of $A$. Constructing explicit quantum circuits to implement these oracles is a complex task, and their exact formulations can significantly increase the quantum resources required.

For the discrete Laplacian with periodic boundary conditions, explicit circuits have been constructed in prior research. In the one dimensional case \cite{Camps.2022} achieved this for the oracles of \cite{Gilyen.2019} and for arbitrary dimensions \cite{Kharazi.2025} developed circuits using the method of linear combinations of unitaries. However, in these approaches, the matrix $A$ is scaled to below the unity norm, which adversely affects both the success probability of the block encoding and the quantum algorithms that utilize it. For example, this scaling increases the simulation time of methods targeting Hamiltonian simulation problems, which calculate the time evolution $e^{t A} \ket{v_0}$ of an initial state $\ket{v_0}$.

In this paper, we present a block encoding for the discrete Laplacian with periodic boundary conditions in arbitrary dimensions $\numDim$, that achieves the optimal sub-normalization factor when $\numDim$ is a power of two. Our method offers several additional benefits: it is fully explicit, allowing for easy translation into the native gate set of quantum computers; it provides an exact encoding of $A$; and it is resource-efficient, requiring only
$\mathcal{O}(\log \numDof)$ $T$ gates. In conjunction with quantum signal processing \cite{Low.2017,Low.2019} or more generally quantum singular value transformation \cite{Gilyen.2019,Martyn.2021} that facilitate a broad range of matrix function implementations, our method shows a path for solving PDEs on quantum computers. 

The paper is structured as follows: Section~\ref{sec:preliminaries} introduces our notation and the concept of block encoding. In Section~\ref{sec:finite-differences} we discuss the finite difference discretization of the Laplacian. Our main results are given in Sections~\ref{sec:block-encoding-laplace-1d} and \ref{sec:block-encoding-laplace}, where the first is dedicated to the block encoding of the discrete Laplacian in one dimension and the latter generalizes this to arbitrary dimensions. Section~\ref{sec:numerical-experiments} concludes the paper with numerical experiments. In the appendix we briefly outline how our ideas can be adapted to first-order differential operators. 

\section{Preliminaries}
\label{sec:preliminaries}
We start by briefly introducing the required notations in quantum computing and the concept of block encoding. For a more detailed introduction, we refer to the textbooks \cite{Nielsen.2012,Rieffel.2014} for the former and \cite{Lin.2022} for the latter.
\subsection{Notations and Conventions}
\label{sec:notations}
In this paper we follow the standard conventions in the quantum computing literature where the Dirac notations $\bra{\cdot}$ and $\ket{\cdot}$ are used to denote row and column vectors, respectively. In particular, $\ketBit{0}$ and $\ketBit{1}$ are used to denote respectively the two canonical basis vectors $(1, 0)^T$ and $(0, 1)^T$ of $\C^2$. For a bitstring $\bit{b} = \bit{b}_0 \bit{b}_1 \dots \bit{b}_{\numSytemBits-1}$ of $\numSytemBits$ bits $\bit{b}_i \in \{0, 1\}$, the notation $\ket{\bit{b}}$ means the tensor product of the respective $\ket{\bit{b}_i}$, i.e.
\begin{equation*}
    \ket{\bit{b}} 
    = \ket{\bit{b}_0 \bit{b}_1 \dots \bit{b}_{\numSytemBits-1}}
    = \ket{\bit{b}_0} \ket{\bit{b}_1} \dots \ket{\bit{b}_{\numSytemBits-1}}
    = \ket{\bit{b}_0} \otimes \ket{\bit{b}_1} \otimes \dots \otimes \ket{\bit{b}_{\numSytemBits-1}}\ .
\end{equation*}
Here, we denote by $\otimes$ the Kronecker product. 

For a non-negative integer $j \in \{0, 1, \dots \numDofQC-1\}$ with $\numDofQC = 2^\numSytemBits$, the binary representation in the big endian convention is given by
\begin{equation*}
    j 
    = [ \bit{j}_0 \bit{j}_1 \dots \bit{j}_{\numSytemBits-1} ]
    = \bit{j}_0 2^{\numSytemBits-1} + \bit{j}_1 2^{\numSytemBits-2} + \dots
    + \bit{j}_{\numSytemBits-1} 2^0 \ .
\end{equation*}
Accordingly, we use $\ket{j} = \ket{\bit{j}_0 \bit{j}_1 \dots \bit{j}_{\numSytemBits-1}}$. Note that with this convention $\ket{j}$, $j=0, 1, \dots,  \numDofQC$, are the canonical basis vectors of $\C^\numDofQC$.

Any quantum state $\ket{v}$ of $\numSytemBits$ qubits can be expressed by a vector in $\C^\numDofQC$ with unit norm. It can be written as
\begin{equation*}
    \ket{v}
    = \sum_{j = 0}^{\numDofQC - 1}
    v_j \ket{j} \ ,
\end{equation*}
where $v_j$ are the amplitudes of $v$ which satisfy $\sum_{j = 0}^{\numDofQC - 1} |v_j|^2 = 1$. Valid operations on quantum states are given by unitary matrices, which are called gates in the quantum computing context. Qubits and sequences of quantum gates are usually visualized as quantum circuits, where wires represent the qubits and boxes the gates. In this paper, we number the wires from bottom to top and we abbreviate multi-wires with a slash. For example, for $\ket{j} = \ket{\bit{j}_0 \bit{j}_1 \bit{j}_2}$ we have
\begin{equation*}
    \begin{array}{c}
    \Qcircuit @C=1em @R=.4em {
        & 
        &
        &
        & \push{\ket{\bit{j}_2} \ }
        & \qw
        & \qw
        \\
        \lstick{\ket{j}} 
        & {/} \qw 
        & \qw
        & \push{=}
        & \push{\ket{\bit{j}_1} \ }
        & \qw
        & \qw
        \\
        &
        &
        &
        & \push{\ket{\bit{j}_0} \ }
        & \qw
        & \qw
    } 
    \end{array}
    \ .
\end{equation*}
Important gates are the Pauli-$X$, -$Y$, and -$Z$ gates and the Hadamard gate $H$, which are given by
\begin{equation*}
    X = \pmat{0 & 1 \\ 1 & 0} \ ,
    \quad
    Y = \pmat{0 & -\i \\ \i & 0} \ ,
    \quad
    Z = \pmat{1 & 0 \\ 0 & -1} \ , 
    \quad
    H = \tfrac{1}{\sqrt2} \pmat{1 & 1 \\1 & -1} \ .
\end{equation*}
The $Y$-rotation about an angle $\theta$ is defined as $RY(\theta) = \exp(-\i \tfrac{\theta}{2} Y)$. Quantum gates are read from left to right, e.g. for two single qubit systems $\ket{v}$ and $\ket{w}$ we have
\begin{equation*}
    \begin{array}{c}
    \Qcircuit @C=1em @R=.4em {
    \lstick{\ket{v}}
    & \gate{X}
    & \gate{Z}
    & \qw
    \\
    \lstick{\ket{w}}
    & \gate{RY(\theta)}
    & \qw
    & \qw
    }
    \end{array}
    \quad = \quad
    RY(\theta) \ket{w} Z X \ket{v} \ .
\end{equation*}
For a gate $U$ the $\ketBit{0}$-controlled version of it is given by
\begin{equation*}
    \begin{array}{c}
    \Qcircuit @C=1em @R=.7em {
        \lstick{\ket{j}} 
        & {/} \qw 
        & \gate{U}
        & \qw
        \\
        \lstick{\ketBit{c}}
        & \qw
        & \ctrlo{-1}
        & \qw
    }   
    \end{array}
    = \ketBit{0}\bra{\bit{0}} \otimes U 
    + \ketBit{1}\bra{\bit{1}} \otimes I \ ,
    \qquad
    \ketBit{c}\ket{j} \to 
    \left\{
    \begin{array}{ll}
         \ketBit{0} U \ket{j} \, , 
         & \bit{c} = 0 \ ,  
         \\
         \ketBit{1} \ket{j} \, ,
         & \bit{c} = 1 \ ,
    \end{array}
    \right.
\end{equation*}
and the $\ketBit{1}$-controlled version by
\begin{equation*}
    \begin{array}{c}
    \Qcircuit @C=1em @R=.7em {
        \lstick{\ket{j}} 
        & {/} \qw 
        & \gate{U}
        & \qw
        \\
        \lstick{\ketBit{c}}
        & \qw
        & \ctrl{-1}
        & \qw
    }
    \end{array}
    = \ketBit{1}\bra{\bit{1}} \otimes U 
    + \ketBit{0}\bra{\bit{0}} \otimes I \ ,
    \qquad
    \ketBit{c}\ket{j} \to 
    \left\{
    \begin{array}{ll}
         \ketBit{0} \ket{j} \, ,
         & \bit{c} = 0 \ ,  
         \\
         \ketBit{1} U \ket{j} \, ,
         & \bit{c} = 1 \ .
    \end{array}
    \right.
\end{equation*}
Multi-controlled gates are constructed similarly. For bundled wires, we give the control as an integer number, e.g. for two control qubits $\ket{c} = \ket{\bit{c}_0 \bit{c}_1}$ we have
\begin{equation*}
    \begin{array}{c}
\Qcircuit @C=1em @R=.7em {
    \lstick{\ket{j}} 
    & {/} \qw 
    & \gate{U}
    & \qw
    \\
    \lstick{\ket{c}}
    & {/} \qw
    & \measure{0} \qwx[-1] 
    & \qw
}
\end{array}
\begin{array}{c}
=
\end{array}
\qquad \ \
\begin{array}{c}
\Qcircuit @C=1em @R=.8em {
    \lstick{\ket{j}} 
    & {/} \qw 
    & \gate{U}
    & \qw
    \\
    \lstick{\ket{\bit{c}_1}}
    & \qw
    & \ctrlo{-1}
    & \qw
    \\
    \lstick{\ket{\bit{c}_0}}
    & \qw
    & \ctrlo{-1}
    & \qw
}
\end{array}
\ ,
\qquad \! \qquad
\begin{array}{c}
\Qcircuit @C=1em @R=.7em {
    \lstick{\ket{j}} 
    & {/} \qw 
    & \gate{U}
    & \qw
    \\
    \lstick{\ket{c}}
    & {/} \qw
    & \measure{1} \qwx[-1] 
    & \qw
}
\end{array}
\begin{array}{c}
= 
\end{array}
\qquad \ \
\begin{array}{c}
\Qcircuit @C=1em @R=.8em {
    \lstick{\ket{j}} 
    & {/} \qw 
    & \gate{U}
    & \qw
    \\
    \lstick{\ket{\bit{c}_1}}
    & \qw
    & \ctrl{-1}
    & \qw
    \\
    \lstick{\ket{\bit{c}_0}}
    & \qw
    & \ctrlo{-1}
    & \qw
}
\end{array}
\end{equation*}
and
\begin{equation*}
    \begin{array}{c}
\Qcircuit @C=1em @R=.7em {
    \lstick{\ket{j}} 
    & {/} \qw 
    & \gate{U}
    & \qw
    \\
    \lstick{\ket{c}}
    & {/} \qw
    & \measure{2} \qwx[-1] 
    & \qw
}
\end{array}
\begin{array}{c}
= 
\end{array}
\qquad \ \
\begin{array}{c}
\Qcircuit @C=1em @R=.8em {
    \lstick{\ket{j}} 
    & {/} \qw 
    & \gate{U}
    & \qw
    \\
    \lstick{\ket{\bit{c}_1}}
    & \qw
    & \ctrlo{-1}
    & \qw
    \\
    \lstick{\ket{\bit{c}_0}}
    & \qw
    & \ctrl{-1}
    & \qw
}
\end{array}
\ ,
\qquad \! \qquad
\begin{array}{c}
\Qcircuit @C=1em @R=.7em {
    \lstick{\ket{j}} 
    & {/} \qw 
    & \gate{U}
    & \qw
    \\
    \lstick{\ket{c}}
    & {/} \qw
    & \measure{3} \qwx[-1] 
    & \qw
}
\end{array}
\begin{array}{c}
= 
\end{array}
\qquad \ \
\begin{array}{c}
\Qcircuit @C=1em @R=.8em {
    \lstick{\ket{j}} 
    & {/} \qw 
    & \gate{U}
    & \qw
    \\
    \lstick{\ket{\bit{c}_1}}
    & \qw
    & \ctrl{-1}
    & \qw
    \\
    \lstick{\ket{\bit{c}_0}}
    & \qw
    & \ctrl{-1}
    & \qw
}
\end{array}
\ .
\end{equation*}

In the subsequence we will need shift-operators defined by 
\begin{equation}
    \Shift{k} \ket{j}
    = \ket{j + k \mod \numDofQC} \ ,
    \qquad
    \text{for }
    k \in \{ 0, \pm 1, \dots, \pm(\numDofQC-1) \} \ .
    \label{eq:shift}
\end{equation}
In fact, we will only need $\Shift{-1}$ and $\Shift{+1}$, which we abbreviate by $\ShiftMinus$ and $\ShiftPlus$, respectively. Moreover, from now on we drop $\mod \numDofQC$ in the kets for a clearer notation. The gates $\ShiftPlusMinus$ can be implemented using only $X$ and (multi-)controlled versions of $X$ \cite[Chapter 6]{Rieffel.2014}. For example, for $\numSytemBits = 3$ qubits these circuits are given as
\begin{equation}
    \begin{array}{c}
    \Qcircuit @C=0.8em @R=0.5em {
& \multigate{2}{\ShiftMinus}
& \qw
&
&
& \gate{X}
& \ctrl{1}
& \ctrl{2}
& \qw
\\
& \ghost{\ShiftMinus}
& \qw
& \push{=}
&
& \qw
& \gate{X} 
& \ctrl{1}
& \qw
\\
& \ghost{\ShiftMinus}
& \qw
&
&
& \qw
& \qw
& \gate{X} 
& \qw
}
    \end{array}
    \ , \qquad \
    \begin{array}{c}
    \Qcircuit @C=0.8em @R=0.5em {
& \multigate{2}{\ShiftPlus}
& \qw
&
&
& \ctrl{2}
& \ctrl{1}
& \gate{X}
& \qw
\\
& \ghost{\ShiftPlus}
& \qw
& \push{=}
&
& \ctrl{1}
& \gate{X} 
& \qw
& \qw
\\
& \ghost{\ShiftPlus}
& \qw
&
&
& \gate{X} 
& \qw
& \qw
& \qw
}

    \end{array}
    \ .
    \label{eq:circuit-shift}
\end{equation}
In the subsequence, we will need a certain sequence of controlled versions of $\ShiftPlusMinus$, which is treated in the next lemma.
\begin{lemma}
    Let $j \in \{0, 1, \dots, \numDofQC-1\}$ and $\ell \in \{0, 1, 2, 3\}$. For $\ket{\ell} = \ket{\ell_0 \ell_1}$ the circuit
    \begin{equation}
        \begin{array}{c}
        \Qcircuit @C=1em @R=.75em {
        \lstick{\ket{j}}
        & {/} \qw 
        & \gate{\ShiftMinus}
        & \gate{\ShiftPlus}
        & \qw
        \\
        \lstick{\ket{\ell_1}}
        & \qw
        & \ctrlo{-1}
        & \qw 
        & \qw 
        \\
        \lstick{\ket{\ell_0}}
        & \qw
        & \qw
        & \ctrl{-2}
        & \qw
        }
        \end{array}
        \quad
        \text{ implements }
        \quad
        \ket{\ell} \ket{j}
        \to
        \left\{ \begin{array}{ll}
            \ket{0} \ket{j - 1} \ , & \ell = 0 \ ,
            \\
            \ket{3} \ket{j + 1} \ , & \ell = 3 \ ,
            \\
            \ket{\ell} \ket{j} \ , & \ell = 1 \text{ or } 2 \ .
        \end{array} \right.
    \end{equation}
    \label{lemma:l-r-identity-1d-0-3}
\end{lemma}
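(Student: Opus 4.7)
My plan is to prove the lemma by a direct case analysis on the four possible values of the control register $\ket{\ell}$. First I would unpack the big-endian convention from Section~\ref{sec:notations}: since $\ket{\ell} = \ket{\ell_0 \ell_1}$ encodes $\ell = 2\ell_0 + \ell_1$, the four values $\ell \in \{0,1,2,3\}$ correspond to the bit pairs $(\ell_0, \ell_1) \in \{(0,0),(0,1),(1,0),(1,1)\}$ respectively.

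Next I would read off the effect of each of the two controlled gates on the data register $\ket{j}$ directly from the circuit. The first gate applies $\ShiftMinus$ to $\ket{j}$ precisely when $\ell_1 = 0$ (open $\ketBit{0}$-control), and the second gate applies $\ShiftPlus$ precisely when $\ell_0 = 1$ (filled $\ketBit{1}$-control); neither gate touches the control register, so $\ket{\ell}$ passes through unchanged in every branch. Combining both observations in each case is then immediate: for $\ell = 0$ only the first gate fires and the data register becomes $\ket{j-1}$; for $\ell = 1$ neither gate fires and the data register remains $\ket{j}$; for $\ell = 3$ only the second gate fires and the data register becomes $\ket{j+1}$; for $\ell = 2$ both gates fire, but by definition~\eqref{eq:shift} the operators $\ShiftPlus$ and $\ShiftMinus$ are mutual inverses on $\Z/\numDofQC\Z$, so $\ShiftPlus \ShiftMinus \ket{j} = \ket{j}$ and the data register is again returned unchanged. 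This reproduces exactly the mapping stated in the lemma.

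There is essentially no obstacle in this argument -- it is a direct verification once the action of each controlled shift is tabulated. The only pitfall worth flagging is the big-endian labeling: if one inadvertently swaps the roles of $\ell_0$ and $\ell_1$, then the cases $\ell = 1$ and $\ell = 2$ exchange, and the crucial cancellation $\ShiftPlus \ShiftMinus = I$ would occur for the wrong basis state, breaking the correspondence with the intended map.
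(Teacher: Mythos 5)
Your proof is correct and follows essentially the same route as the paper: a case-by-case tabulation of which controlled shift fires for each bit pair under the big-endian convention, with the cancellation $\ShiftPlus \ShiftMinus = I$ handling the $\ell = 2$ case. No gaps.
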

\begin{proof}
    Writing $\ell$ in binary representation $\mathtt{l}_0 \mathtt{l}_1$, we see that the circuit maps
    \begin{equation*}
    \ket{\ell} \ket{j}
        \to
        \left\{ \begin{array}{ll}
            \ket{\bit{0}\bit{0}} \ShiftMinus \ket{j} \ , 
            & \mathtt{l}_0 \mathtt{l}_1 = \mathtt{00} \ ,
            \\
            \ket{\bit{0} \bit{1}} \ket{j} \ , 
            &\mathtt{l}_0 \mathtt{l}_1 = \mathtt{01} \ ,
            \\
            \ket{\bit{1} \bit{0}} \ShiftPlus \ShiftMinus \ket{j} \ , 
            &\mathtt{l}_0 \mathtt{l}_1 = \mathtt{10} \ .
            \\
            \ket{\bit{1}\bit{1}} \ShiftPlus \ket{j} \ , 
            & \mathtt{l}_0 \mathtt{l}_1 = \mathtt{11} \ ,
        \end{array} \right.
    \end{equation*}
    Using \eqref{eq:shift} and the fact that $\ShiftPlus \ShiftMinus = I$ concludes the proof.
\end{proof}
\noindent
Indeed, the circuit in Lemma~\ref{lemma:l-r-identity-1d-0-3} is an efficient implementation of the following multi-controlled circuit
\begin{equation*}
    \begin{array}{c}
    \Qcircuit @C=1em @R=.7em {
        & {/} \qw  & \gate{\ShiftMinus} & \gate{\ShiftPlus} & \qw
        \\
        & \qw & \multimeasure{1}{0} \qwx[-1] & \multimeasure{1}{3} \qwx[-1] & \qw 
        \\
        & \qw & \ghost{0}  & \ghost{3} & \qw
        }
    \end{array}
    \qquad 
    =
    \qquad
    \begin{array}{c}
    \Qcircuit @C=1em @R=.7em {
        & {/} \qw 
        & \gate{\ShiftMinus}
        & \gate{\ShiftPlus}
        & \qw
        \\
        & \qw
        & \ctrlo{-1} 
        & \qw
        & \qw 
        \\
        & \qw
        & \qw
        & \ctrl{-2} 
        & \qw
    }
    \end{array}
    \ .
\end{equation*}
\subsection{Block Encoding}
\label{sec:block-encoding}
As above, let $\numDofQC = 2^\numSytemBits$ be a power of two. For an arbitrary matrix $A \in \C^{\numDofQC \times \numDofQC}$ denote by $\widetilde{A} = \tfrac{1}{\norm{A}_2} A$ its scaled version with unitary norm. The technique of embedding $\widetilde{A}$ into a (larger) unitary matrix $U \in \C^{(\numAncillaDof \numDofQC) \times (\numAncillaDof \numDofQC)}$, $M = 2^\numAncillaBits$, such that
\begin{subequations}
\label{eq:definition-block-encoding}
\begin{equation}
    U
    = \pmat{
        \scalingFactorBE \widetilde{A} & \ast 
        \\ \ast & \ast
    }
    \label{eq:definition-block-encoding-1}
\end{equation}
is called block encoding. Here, $\ast$ denote matrix blocks of appropriate size, whose values are irrelevant, and $\scalingFactorBE \in [-1, 1]$ is a sub-normalization scaling factor. The structure of $U$ consists of $\numAncillaDof$ rows and $\numAncillaDof$ columns of blocks of size $\numDofQC \times \numDofQC$, i.e.
\begin{equation*}
    U \quad = \quad
    \underbrace{
    \arraycolsep=1.4pt
    \begin{array}{ccccc}
        \square & \square & \dots & \square & \square
        \\
        \square & \square & \dots & \square & \square
        \\
        \vdots & \vdots & & \vdots & \vdots
        \\
        \square & \square & \dots & \square & \square
    \end{array}
    }_{\displaystyle \numAncillaDof}
    \left\}
    \begin{array}{l}
    \phantom{\square}
    \\
    \phantom{\square}
    \\
    \phantom{\vdots}
    \\
    \phantom{\square}
    \end{array}
    \right.
    \hspace{-0.7cm}
    {\displaystyle \numAncillaDof}
    \qquad \quad
    \text{where } \quad
    \square = \numDofQC \times \numDofQC \text{ matrix block}
    \ .
\end{equation*}
Note that the block encoding requires $\numAncillaBits$ ancilla qubits.

Consider the composite state $\ket{w} = \ket{0} \ket{v}$, where $\ket{0} \in \C^\numAncillaDof$ belongs to the ancilla system and $\ket{v} \in \C^\numDofQC$ is an arbitrary vector with unit norm, $\norm{\! \ket{v} \!}_2 = 1$. Then,
\begin{equation}
    U \ket{w}
    = \ket{0} \scalingFactorBE \widetilde{A} \ket{v}
    + \ket{\perp} \ ,
    \label{eq:definition-block-encoding-2}
\end{equation}
where
\begin{equation*}
    \ket{\perp} = \sum_{\ell=1}^{\numAncillaDof - 1}
    \ket{\ell} \ket{\ast} 
\end{equation*}
is orthogonal to $\ket{0} \scalingFactorBE \widetilde{A} \ket{v}$. This means that if we measure the ancilla qubits in the state $\ket{0}$, the system register contains $\scalingFactorBE \widetilde{A} \ket{v} / \norm{\scalingFactorBE \widetilde{A} \ket{v}}_2$:
\begin{equation}
    \begin{array}{c}
    \Qcircuit @C=1em @R=.3em {
        \lstick{\ket{v}} 
        & {/} \qw 
        & \multigate{2}{U}
        & \qw
        & \qw 
        &\push{\scalingFactorBE \widetilde{A} \ket{v} / \norm{\scalingFactorBE \widetilde{A} \ket{v}}_2}
        \\
        &
        &
        & \push{\ket{0}}
        &
        \\
        \lstick{\ket{0}}
        & {/} \qw
        & \ghost{U}
        & \meter 
        &
    }
    \end{array}
    \label{eq:definition-block-encoding-3}
\end{equation}
The probability to measure the ancilla system in $\ket{0}$ is called the success probability of the block encoding. It is given by 
\begin{equation}
    \mathrm{success\ probability}
    = \scalingFactorBE^2 \norm{\widetilde{A} \ket{v}}_2^2 \ .
    \label{eq:definition-block-encoding-4}
\end{equation}
\end{subequations}
%
Note that the success probability depends on the sub-normalization factor $\scalingFactorBE$ and is the highest for $|\scalingFactorBE|=1$. Also note that the sub-normalization factor plays an important role if $U$ is used in methods for the Hamiltonian simulation problem, where an initial state $\ket{v_0}$ evolves over time as $\mathrm{e}^{\i t \widetilde{A}} \ket{v_0}$. Here, the sub-normalization factor multiplies the simulation time $t$ due to $\mathrm{e}^{\i t \widetilde{A}} = \mathrm{e}^{\i (t/\scalingFactorBE) (\scalingFactorBE \widetilde{A})}$.
\section{Finite Difference Discretization of the Laplacian}
\label{sec:finite-differences}
Let $\domain_\numDim = [0, 1]^\numDim$ be the $\numDim$-dimensional unit hypercube. We denote the components of points $\vecx \in \domain_\numDim$ by $\vecx = (x^{(0)}, x^{(1)}, \dots, x^{(\numDim-1)})$. On $\domain_\numDim$ we consider the Laplacian $\laplace_\numDim$ defined by
\begin{subequations}
\label{eq:laplace-d-dim}
\begin{equation}
    \laplace_\numDim
    = \sum_{d=0}^{\numDim-1} 
    \tfrac{\partial^2}{\partial (x^{(d)})^2} \ ,
    \label{eq:laplace-d-dim-diff-op}
\end{equation}
for functions $v$ with periodic boundary conditions
\begin{equation}
    v(x^{(0)}, \dots, \underbrace{0}_d, \dots, x^{(\numDim-1)})
    = 
    v(x^{(0)}, \dots, \underbrace{1}_d, \dots, x^{(\numDim-1)})
    \label{eq:laplace-d-dim-bnd-cond}
\end{equation}
for all $d = 0, 1, \dots, \numDim-1$. The periodicity condition \eqref{eq:laplace-d-dim-bnd-cond} means that $v$ has the same value on all opposite sides of $\domain_\numDim$.
\end{subequations}

Let
\begin{equation*}
    \domain_{1,h}
    = \{ j h \ | \ j = 0, 1, \dots, \numDofOne-1\}
\end{equation*}
be a discretization of the unit interval $\domain_1 = [0, 1]$ into $\numDofOne$ equidistant points with grid width $h = 1/\numDofOne$. We assume that $\numDofOne$ is a power of two, i.e. $\numDofOne = 2^\numSytemBits$ for some $\numSytemBits \in \N$. Note that in $\domain_{1,h}$ we exclude the end point $x=1$ due to the periodic boundary conditions. Using $\domain_{1,h}$ we can build a discretization of $\domain_{\numDim}$ with $\numDof = \numDofOne^\numDim = 2^{\numSytemBits \numDim}$ equidistant points by
\begin{align*}
    \domain_{\numDim,h}
    &= \domain_{1,h} \times \domain_{1,h} \times \dots \domain_{1,h}
    \\
    &= \bigl\{ (\jind{0} h , \jind{1} h, \dots, \jind{\numDim-1} h)
    \ | \ 
    \jind{d} = 0, 1, \dots, \numDofOne-1 
    \bigr\} \ .
\end{align*}
In Figure~\ref{fig:grid-examples} we give examples for $\domain_{\numDim,h}$ with dimensions $\numDim = 1, 2,$ and $3$.
\begin{figure}
    \centering
    \begin{subfigure}{0.3\linewidth}
        \includegraphics[width=\gridPlotWidth]{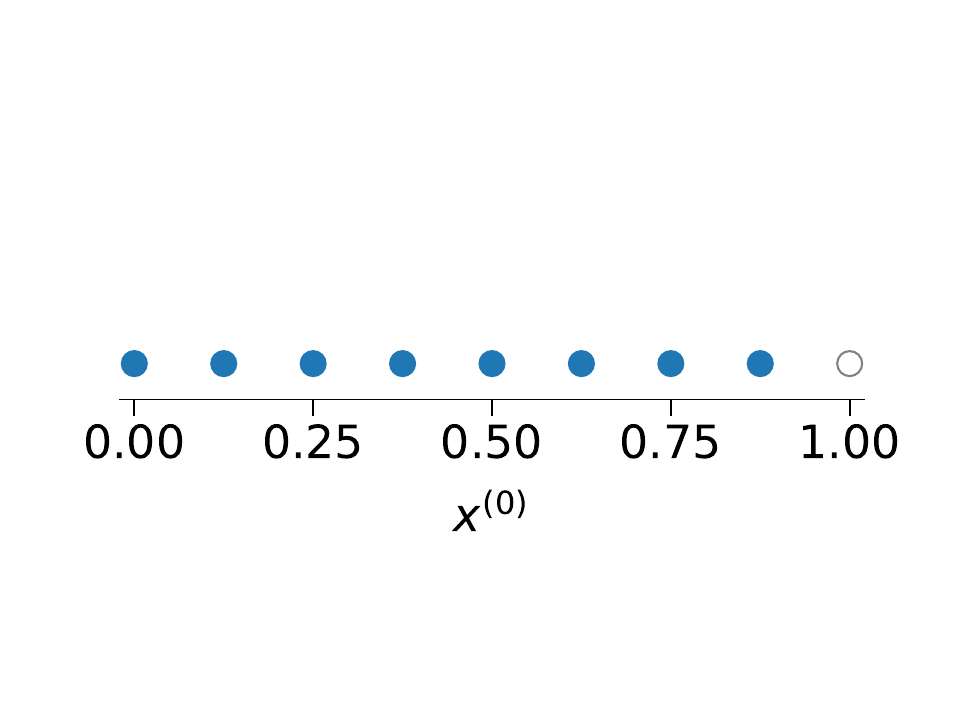}
        \\[0.5cm]
        \caption{Grid $\domain_{1,h}$ with $\numDofOne = 8$, i.e. $\numDofQC_1 =\numDofOne = 8$ points.}
    \end{subfigure}
    \ \ 
    \begin{subfigure}{0.3\linewidth}
        \includegraphics[width=\gridPlotWidth]{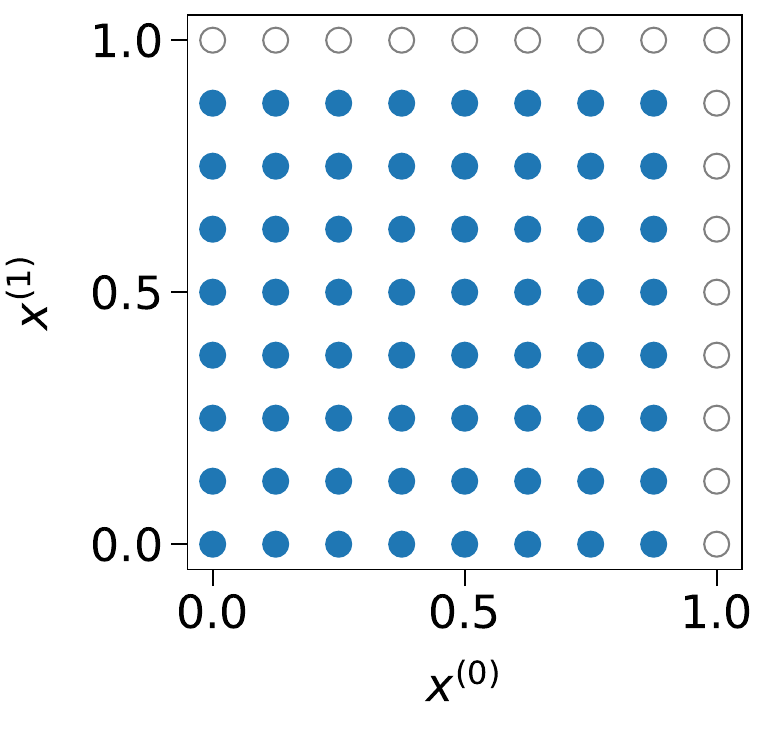}
        \caption{Grid $\domain_{2,h}$ with $\numDofOne = 8$, i.e. $\numDofQC_2 =\numDofOne^2 = 64$ points.}
    \end{subfigure}
    \ \ 
    \begin{subfigure}{0.3\linewidth}
        \includegraphics[width=\gridPlotWidth]{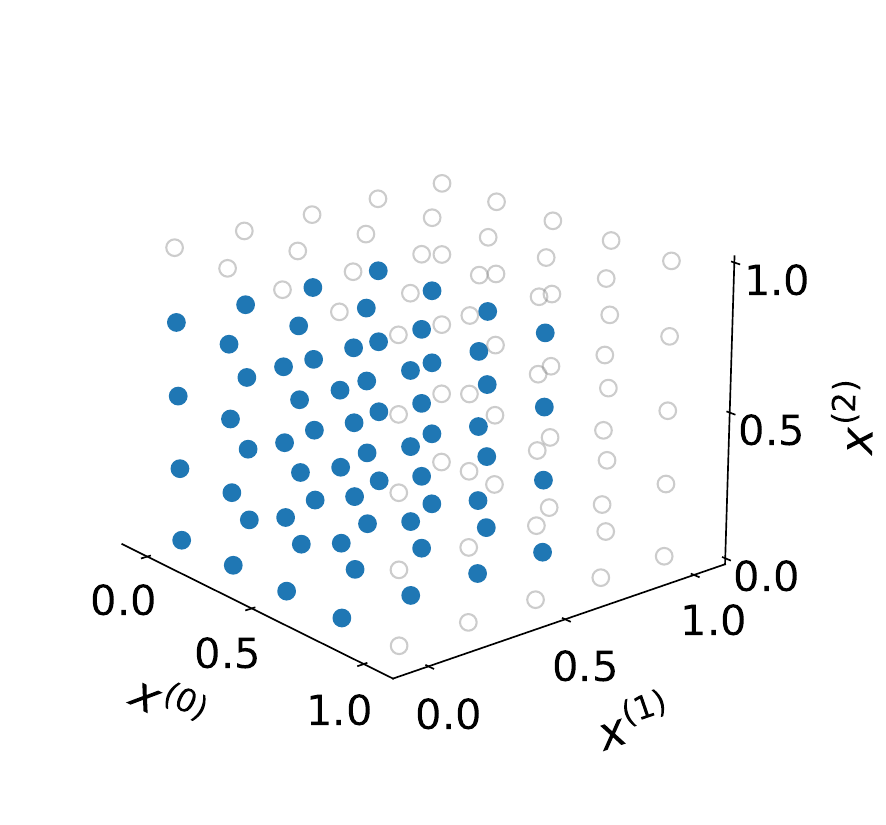}
        \caption{Grid $\domain_{3,h}$ with $\numDofOne = 4$, i.e. $\numDofQC_3 =\numDofOne^3 = 64$  points.}
    \end{subfigure}
    \caption{Examples of grids $\domain_{\numDim,h}$ for the domain $\domain_{\numDim} = [0, 1]^\numDim$ for dimensions $\numDim=1$ (a), $\numDim=2$ (b), and $\numDim=3$ (c). The solid blue points are the grid points $\vecx_k \in \domain_{\numDim,h}$. The grey circled points do not belong to the grid as the value of a function $v$ on those points is determined by the periodic boundary condition \eqref{eq:laplace-d-dim-bnd-cond}.}
    \label{fig:grid-examples}
\end{figure}
We map the set $\domain_{\numDim,h}$ into a vector $\vecDomain_{\numDim,h}$ by
\begin{equation*}
    \vecDomain_{\numDim,h}
    = (\jind{0} h , \jind{1} h, \dots, \jind{\numDim-1} h) 
    \ketSimple{\jind{\numDim-1}} \dots \ketSimple{\jind{1}} \ketSimple{\jind{0}}
    \ .
\end{equation*}
The corresponding ordering of the grid points for $\vecDomain_{2,h}$ can be seen in
Figure~\ref{fig:grid-examples-vectorization}.
\begin{figure}
    \centering
    \includegraphics[width=\gridPlotWidth]{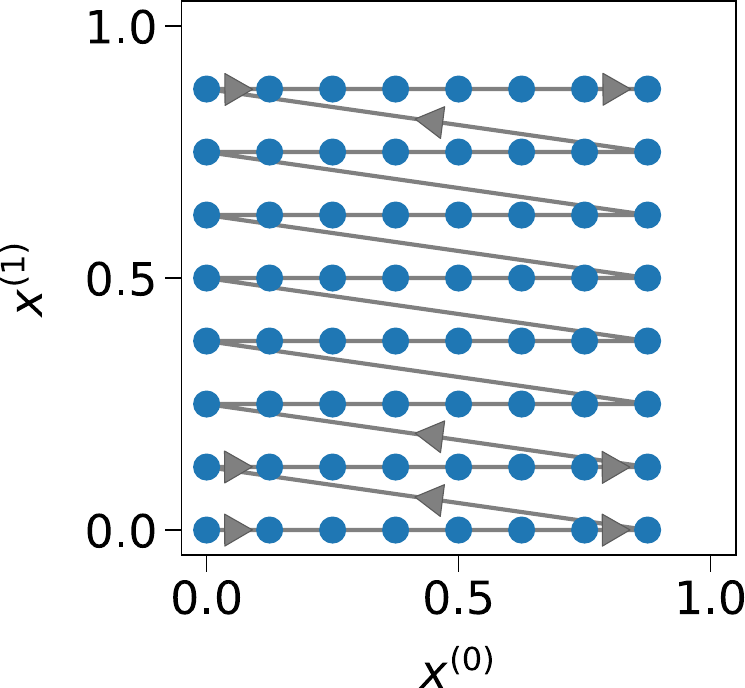}
    \caption{Example of the ordering of the grid points of the vectorization of $\domain_{2,h}$ into $\vecDomain_{2,h}$.}
    \label{fig:grid-examples-vectorization}
\end{figure}
In this setting, the finite difference approximation of the one dimensional Laplacian $\laplaceOne$ with periodic boundary conditions is given by
\begin{equation}
    \laplaceOneh
    = \frac{1}{h^2}
    \arraycolsep=1.4pt
    \pmat{
    -2 & 1  &        &         &       & 1            
    \\
    1  & -2 & 1
    \\
       &    & \ddots & \ddots & \ddots
    \\
       &    &        &    1   &  -2    & 1
    \\
    1  &    &        &        &  1     & -2
    } \ .
\end{equation}
For higher dimensions we have, due to the tensor structure of $\domain_{\numDim, h}$ and due to the ordering of $\vecDomain_{\numDim,h}$,
\begin{equation}
    \laplaceDh
    = \sum_{d=0}^{\numDim-1}
    \underbrace{
        I_{\numDofOne} \otimes \dots \otimes I_{\numDofOne}
    }_{
        \numDim-1-d \text{ times}
    } 
    \ \otimes \ \laplaceOneh \ \otimes \
    \underbrace{
        I_{\numDofOne} \otimes \dots \otimes I_{\numDofOne}
    }_{
        d \text{ times}
    } \ .
    \label{eq:laplaceDh}
\end{equation}
Here, $I_{\numDofOne}$ is the $\numDofOne \times \numDofOne$ dimensional identity matrix. 
The eigenvalue with largest absolute value of $\laplace_{\numDim, h}$ is given by 
\begin{subequations}
    \begin{equation}
    \lambdaMax{\numDim} = \frac{4 \numDim}{h^2} \ ,
    \end{equation}
so that the scaled version
\begin{equation}
    \laplaceDhScaled
    = \frac{1}{\lambdaMax{\numDim}}
    \laplaceDh
    \label{eq:laplaceDhScaled-definition}
\end{equation}
\end{subequations}
has unitary norm $\norm{\widetilde{\laplace}_{\numDim, h}}_2 = 1$. Inserting this into \eqref{eq:laplaceDh} we obtain
\begin{equation}
    \laplaceDhScaled
    = \frac{1}{\numDim}
    \sum_{d=0}^{\numDim-1}
    \underbrace{
        I_{\numDofOne} \otimes \dots \otimes I_{\numDofOne}
    }_{
        \numDim-1-d \text{ times}
    } 
    \ \otimes \ \laplaceOnehScaled \ \otimes \
    \underbrace{
        I_{\numDofOne} \otimes \dots \otimes I_{\numDofOne}
    }_{
        d \text{ times}
    } \ .
    \label{eq:laplaceDhScaled}
\end{equation}
Let us remark that for $\numDim = 1$ and the canonical basis vector $\ket{j}$, $j=0, 1, \dots, \numDofOne-1$, we have
\begin{equation}
    \laplaceOnehScaled \ket{j}
    =
    \tfrac{1}{4} \ket{j - 1} \
    - \tfrac{1}{2} \ket{j} \
    + \tfrac{1}{4} \ket{j + 1}
    \ .
    \label{eq:laplace-1d-applied-to-canonical-basis}
\end{equation}
Recall at this point our convention to drop $\mod \numDofOne$ in $\ket{j \pm 1}$. From \eqref{eq:laplace-1d-applied-to-canonical-basis} the action of $\laplaceDhScaled$ for general $\numDim$ on basis vectors $\ket{j}$ can easily be calculated. For example, for $\numDim = 2$ and $\ket{j} = \ketSimple{j^{(1)}}\ketSimple{j^{(0)}}$, $j^{(d)} = 0, 1, \dots, \numDofOne-1$, we have
\begin{align*}
    \laplaceTwohScaled \ket{j}
    =& \tfrac{1}{2} \Bigl(
    \ketSimple{j^{(1)}} \laplaceOnehScaled  \ketSimple{j^{(0)}}
    +
    \laplaceOnehScaled  \ketSimple{j^{(1)}}\ketSimple{j^{(0)}}
    \Bigr)
    \\
    =& \tfrac{1}{8} \Bigl(
    \ketSimple{j^{(1)}}\ketSimple{j^{(0)}-1}
    + \ketSimple{j^{(1)}+1}\ketSimple{j^{(0)}}
    \Bigr)
    \\
    &- \tfrac{1}{2} \ketSimple{j^{(1)}}\ketSimple{j^{(0)}}
    + \tfrac{1}{8} \Bigl(
    \ketSimple{j^{(1)}-1}\ketSimple{j^{(0)}}
    + \ketSimple{j^{(1)}+1}\ketSimple{j^{(0)}}
    \Bigr) \ .
    \label{eq:laplace-2d-applied-to-canonical-basis}
\end{align*}
We conclude this section with a lemma that will later allow us to calculate success probabilities. 
\begin{lemma}
    Let $v: \domain_\numDim \to \domain_\numDim$ be a four times differentiable function with periodic boundary conditions. Denote by $\vecv = (v(\vecx_k))_{k=0}^{\numDof-1}$ the vector of evaluations of $v$ on the grid points $\vecx_k \in \vecDomain_{\numDim,h}$. Then, for $\ket{v} = \vecv / \norm{\vecv}_2$ we have
    \begin{equation}
        \laplaceDhScaled \ket{v}
        = \frac{1}{\lambda_{\numDim, \mathrm{max}} \norm{\vecv}_2} 
        \Bigl(
            \laplaceD \vecv
            + \mathcal{O}(h^2) 
        \Bigr) \ ,
        \qquad
        \laplaceD \vecv 
        = \bigl(
            (\laplaceD v)(\vecx_k)
        \bigr)_{k=0}^{\numDof-1} \ .
        \label{eq:success-prob-laplace-1}
    \end{equation}
    For small $h$ we have the scaling
    \begin{equation}
        \bigl\| \laplaceDhScaled \ket{v} \bigr\|_2
        \sim
        C(\numDim, v) h^2 \ ,
        \qquad
        C(\numDim, v)
        = \frac{1}{4 \numDim}
        \frac{
            \norm{\laplaceD v}_{L^2(\domain_\numDim)}
        }{
            \norm{v}_{L^2(\domain_\numDim)}
        } \ .
        \label{eq:success-prob-laplace-2}
    \end{equation}
    \label{lemma:success-prob-laplace}
\end{lemma}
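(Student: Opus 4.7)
The plan is to reduce the statement to two standard ingredients: a pointwise consistency estimate for the finite difference stencil, and a Riemann-sum approximation of the $L^2$ norm. Both are elementary once the right scaling is tracked, so the work is mostly bookkeeping; the only subtle point is the handling of the periodic boundary in the Taylor remainder.

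First I would establish the pointwise consistency of $\laplaceDh$ on smooth functions. For any $x \in \domain_1$, a Taylor expansion of $v$ around $x$ with integral remainder (using that $v$ is $C^4$ and periodic, so the expansion is valid across the wrap-around $x=0 \leftrightarrow x=1$ as well) gives
\begin{equation*}
    \tfrac{1}{h^2}\bigl( v(x+h) - 2 v(x) + v(x-h) \bigr)
    = v''(x) + \tfrac{h^2}{12} v^{(4)}(\xi)
\end{equation*}
for some $\xi$ in a neighbourhood of $x$. Evaluated at the grid points this yields $\laplaceOneh \vecv = (\laplaceOne v)(\vecx_k)_{k} + \mathcal{O}(h^2)$ componentwise, with the $\mathcal{O}(h^2)$ uniform in $k$ because $v^{(4)}$ is bounded on the compact torus. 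The tensor-product identity \eqref{eq:laplaceDh} together with the ordering of $\vecDomainh$ then propagates this to $\laplaceDh \vecv = (\laplaceD v)(\vecx_k)_{k} + \mathcal{O}(h^2)$. Dividing by $\lambdaMax{\numDim}$ and by $\norm{\vecv}_2$ yields \eqref{eq:success-prob-laplace-1} immediately.

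For \eqref{eq:success-prob-laplace-2} I would pass from the discrete $\ell^2$ norm to the continuous $L^2$ norm via a Riemann sum. Since $v$ and $\laplaceD v$ are continuous on the torus, the midpoint/left-endpoint Riemann sums on the uniform grid $\vecDomainh$ converge, giving
\begin{equation*}
    h^{\numDim} \norm{\vecv}_2^2 \longrightarrow \norm{v}_{L^2(\domain_\numDim)}^2,
    \qquad
    h^{\numDim} \norm{\laplaceD \vecv}_2^2 \longrightarrow \norm{\laplaceD v}_{L^2(\domain_\numDim)}^2
\end{equation*}
as $h \to 0$. Taking the $\ell^2$ norm in \eqref{eq:success-prob-laplace-1}, using $\norm{\mathcal{O}(h^2)}_2 = h^{-\numDim/2}\mathcal{O}(h^2)$ (the uniform pointwise bound times $\sqrt{\numDof}$), and the fact that this lower-order error is dominated by the leading term, I obtain
\begin{equation*}
    \bigl\| \laplaceDhScaled \ket{v} \bigr\|_2
    = \frac{1}{\lambdaMax{\numDim}} \frac{\norm{\laplaceD \vecv}_2}{\norm{\vecv}_2} + o(1) \cdot \tfrac{1}{\lambdaMax{\numDim}}
    \sim \frac{h^2}{4\numDim} \frac{\norm{\laplaceD v}_{L^2(\domain_\numDim)}}{\norm{v}_{L^2(\domain_\numDim)}},
\end{equation*}
where I have inserted $\lambdaMax{\numDim} = 4\numDim/h^2$. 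This is exactly \eqref{eq:success-prob-laplace-2}.

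The main obstacle, such as it is, lies in making the error term in \eqref{eq:success-prob-laplace-1} produce the correct leading asymptotics in \eqref{eq:success-prob-laplace-2}: one has to observe that $\norm{\mathcal{O}(h^2) \text{ componentwise}}_2$ in $\R^{\numDof}$ carries a factor $\sqrt{\numDof} = h^{-\numDim/2}$, and that after dividing by $\norm{\vecv}_2 \sim h^{-\numDim/2} \norm{v}_{L^2}$ this contributes $\mathcal{O}(h^2)$, which is of the same order as the leading term. To conclude the asymptotic $\sim$ rather than merely an $\mathcal{O}$-bound, one needs the (implicitly assumed) non-degeneracy $\norm{\laplaceD v}_{L^2(\domain_\numDim)} \neq 0$; otherwise the statement reduces to an $\mathcal{O}(h^2)$ upper bound, which the same argument still provides.
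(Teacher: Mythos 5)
Your proof is correct and takes essentially the same route as the paper: a Taylor-expansion consistency estimate for the stencil followed by a Riemann-sum identification $h^{\numDim}\norm{\vecw}_2^2 \approx \norm{w}_{L^2(\domain_\numDim)}^2$; you are in fact more explicit than the paper about how the componentwise $\mathcal{O}(h^2)$ error and the $h^{-\numDim/2}$ factors propagate through the $\ell^2$ norm, and about the non-degeneracy $\norm{\laplaceD v}_{L^2(\domain_\numDim)} \neq 0$ needed for the asymptotic $\sim$. One small slip in your closing remark: after dividing by $\lambdaMax{\numDim} = 4\numDim/h^2$ the truncation error contributes $\mathcal{O}(h^4)$ to $\bigl\|\laplaceDhScaled \ket{v}\bigr\|_2$, i.e. one order \emph{below} the leading $\mathcal{O}(h^2)$ term rather than the same order, which is precisely why the asymptotic equivalence holds.
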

\begin{proof}
    Taylor expansion for a function $u$ around $x_{k \pm 1} = x_k \pm h$ gives
    \begin{equation}
        u(x_{k + 1}) - 2u(x_{k}) + u(x_{k - 1})
        = h^2 u^{\prime \prime}(x_k) 
        + R_3(h) \ , 
        \quad
        |R_3(h)| \le \tfrac{h^4}{12}
        \max_{x \in [0, 1]} |u^{(4)}(x)|
        \ ,
        \label{eq:taylor-expansion}
    \end{equation}
    and consequently
    \begin{equation*}
        \laplaceDh \vecv
        = \laplaceD \vecv + \mathcal{O}(h^2) \ .
    \end{equation*}
    The assertion \eqref{eq:success-prob-laplace-1} then follows with \eqref{eq:laplaceDhScaled-definition}.
    For the statement \eqref{eq:success-prob-laplace-2} note that for a function $w: \Omega_\numDim \to \Omega_\numDim$ the 2-norm of its evaluations $\vecw = (w(\vecx_k))_k$ on the grid $\vecDomain_{\numDim,h}$ can be written as
    \begin{equation*}
        \norm{\vecw}_2^2
        = \frac{1}{h^{\numDim}} 
        \sum_{\vecx_k \in \vecDomain_{\numDim,h}}
        h^{\numDim} |w(\vecx_k)|^2
        \approx \frac{1}{h^{\numDim}}
        \int_{\Omega_\numDim} |w(\vecx)|^2 \ d\vecx
        = \frac{1}{h^{\numDim}}
        \norm{w}_{L^2(\Omega_\numDim)}^2 \ .
    \end{equation*}
    Here, the approximation is due to the Riemann summation. Inserting this into \eqref{eq:success-prob-laplace-1} concludes the proof.
\end{proof}
\section{Block Encoding of One Dimensional Laplace}
\label{sec:block-encoding-laplace-1d}
In this section we show the block encoding of $\laplaceOnehScaled$. This will allow us to derive block encodings of general $\laplaceDhScaled$ in the next section.
\begin{theorem}
The circuit
\begin{subequations}
\label{eq:theorem-encoding-laplace-1d}
\begin{equation}
    \begin{array}{c}
    \Qcircuit @C=1em @R=.7em {
    \lstick{\ket{j}} 
    & {/} \qw 
    & \barrier[-1.5em]{2} \qw 
    & \gate{\ShiftMinus} 
    & \gate{\ShiftPlus} \barrier[-1.2em]{2} 
    & \qw 
    & \qw 
    \\
    \lstick{\ket{\ell_1}} 
    & \gate{H} 
    & \gate{Z} 
    & \ctrlo{-1}
    & \qw 
    & \gate{H} 
    & \qw
    \\
    \lstick{\ket{\ell_0}} 
    & \gate{H} 
    & \gate{Z} 
    & \qw 
    & \ctrl{-2} 
    & \gate{H} 
    & \qw
    \\
    \\
    \\
    & \text{part 1} 
    &
    & \text{part 2}
    &
    & \text{part 3}
}
    \end{array}
    \label{eq:theorem-encoding-laplace-1d-1}
\end{equation}
is a block encoding of $\laplaceOnehScaled$ in the sense of 
\eqref{eq:definition-block-encoding} with $\numAncillaBits=2$ and $\scalingFactorBE=1$, i.e. 
\begin{equation}
    U 
    = \pmat{\laplaceOnehScaled & \ast \\ \ast & \ast} 
    \in \R^{4 \numDofOne \times 4 \numDofOne} \ .
    \label{eq:theorem-encoding-laplace-1d-2}
\end{equation}
If $U$ is applied to a state $\ket{v}$ satisfying the assumptions of Lemma~\ref{lemma:success-prob-laplace} the success probability scales as
\begin{equation}
    \bigl\| \laplaceOnehScaled \ket{v} \bigr\|_2^2
    \sim
    \frac{h^4}{16}
    \frac{
            \norm{\laplaceD v}_{L^2(\domain_1)}^2
        }{
            \norm{v}_{L^2(\domain_1)}^2
    } \ .
    \label{eq:theorem-encoding-laplace-1d-3}
\end{equation}
\end{subequations}
\label{thm:block-encoding-laplace-1d}
\end{theorem}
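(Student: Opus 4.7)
The plan is to verify the block encoding identity \eqref{eq:theorem-encoding-laplace-1d-2} by directly computing, for an arbitrary computational basis vector $\ket{j}$ of the data register, the $\ket{0}$-component of the ancilla in $U\ket{0}\ket{j}$ and showing it equals $\laplaceOnehScaled\ket{j}$ as given by \eqref{eq:laplace-1d-applied-to-canonical-basis}. By linearity this yields $(\bra{0}\otimes I)\,U\,(\ket{0}\otimes I) = \laplaceOnehScaled$, which is \eqref{eq:theorem-encoding-laplace-1d-2} with $\scalingFactorBE=1$. The probability statement \eqref{eq:theorem-encoding-laplace-1d-3} then follows from \eqref{eq:definition-block-encoding-4} combined with the $\numDim=1$ case of Lemma~\ref{lemma:success-prob-laplace}.

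Walking through the three parts of the circuit, in part~1 the two Hadamards prepare the uniform ancilla superposition $\tfrac{1}{2}\sum_{\ell=0}^{3}\ket{\ell}\ket{j}$, and the subsequent $Z\otimes Z$ layer imprints the signs $(-1)^{\ell_0+\ell_1}$, producing $\tfrac{1}{2}(\ket{0}-\ket{1}-\ket{2}+\ket{3})\ket{j}$. Part~2 is exactly the circuit analysed in Lemma~\ref{lemma:l-r-identity-1d-0-3}, which shifts the data register to $\ket{j-1}$ on the $\ell=0$ branch and to $\ket{j+1}$ on the $\ell=3$ branch, while leaving the branches $\ell\in\{1,2\}$ unchanged. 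This produces
\begin{equation*}
\tfrac{1}{2}\bigl(\ket{0}\ket{j-1} - \ket{1}\ket{j} - \ket{2}\ket{j} + \ket{3}\ket{j+1}\bigr) \, .
\end{equation*}
Finally, since $\bra{0}(H\otimes H)\ket{\ell}=\tfrac{1}{2}$ for every $\ell\in\{0,1,2,3\}$, projecting onto $\ket{0}$ after the Hadamards of part~3 picks up an additional factor $\tfrac{1}{2}$ and sums the four branches with their signs, yielding $\tfrac{1}{4}\ket{j-1}-\tfrac{1}{2}\ket{j}+\tfrac{1}{4}\ket{j+1}$, which by \eqref{eq:laplace-1d-applied-to-canonical-basis} is exactly $\laplaceOnehScaled\ket{j}$.

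There is no genuine technical obstacle here; the argument is little more than bookkeeping across four superposition branches. The only point worth stressing is that the sign pattern installed by $Z\otimes Z$ together with the specific shift assignment of Lemma~\ref{lemma:l-r-identity-1d-0-3} is exactly what reproduces the stencil $(\tfrac{1}{4},-\tfrac{1}{2},\tfrac{1}{4})$ with the correct signs, the two inactive branches $\ell=1,2$ together accounting for the diagonal coefficient $-\tfrac{1}{2}$. That the optimal scaling $\scalingFactorBE=1$ is achievable at all is due to the sum of absolute values of this stencil being equal to $1$, so no further sub-normalisation is incurred.
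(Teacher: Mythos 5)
Your proposal is correct and follows essentially the same route as the paper's proof: track a computational basis state $\ket{j}$ through the three parts of the circuit, invoke Lemma~\ref{lemma:l-r-identity-1d-0-3} for part~2, and read off the $\ket{0}$-ancilla component, with the success probability obtained from Lemma~\ref{lemma:success-prob-laplace} at $\numDim=1$. The only cosmetic difference is that you project onto $\ket{0}$ after the final Hadamards, whereas the paper writes out all four ancilla branches before identifying the $\ket{0}$ block; the content is identical.
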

\begin{proof}
We show that \eqref{eq:definition-block-encoding-2} is satisfied for any basis vector $\ket{j}$, $j = 0, 1, \dots, \numDofOne-1$. Going through the three parts of the quantum circuit \eqref{eq:theorem-encoding-laplace-1d-1} yields:
\begin{align*}
    \ket{0} \ket{j}
    &\to \tfrac{1}{\sqrt{2^2}} \Bigl(
    \ket{0} - \ket{1} - \ket{2} + \ket{3}
    \Bigr) \ket{j}
    \\
    &\to
    \tfrac{1}{\sqrt{2^2}} \Bigl(
    \ket{0}\ket{j - 1}
    - \ket{1} \ket{j}
    - \ket{2} \ket{j}
    + \ket{3}\ket{j + 1}
    \Bigr)
    \\
    &\to \tfrac{1}{2^2} 
    \ket{0} \Bigl( \ket{j-1} - 2 \ket{j} + \ket{j+1} \Bigr)
    \\
    & \quad + \tfrac{1}{2^2} 
    \ket{1} \Bigl( \ket{j-1} - \ket{j+1} \Bigr)
    \\
    & \quad + \tfrac{1}{2^2} 
    \ket{2} \Bigl( \ket{j-1} - \ket{j+1} \Bigr)
    \\
    & \quad + \tfrac{1}{2^2} 
    \ket{3} \Bigl( \ket{j-1} + 2 \ket{j} + \ket{j+1} \Bigr)
    \numberthis \label{eq:proof-block-encoding-laplace-1d}
    \\
    &= \ket{0} \laplaceOnehScaled \ket{j} 
    + \sum_{\ell=1}^3 \ket{\ell} ( \dots ) \ .
\end{align*}
Here, we used Lemma~\ref{lemma:l-r-identity-1d-0-3} for part 2.
\end{proof}
\noindent
From \eqref{eq:proof-block-encoding-laplace-1d} we obtain the first column of blocks of $U$ in \eqref{eq:theorem-encoding-laplace-1d-2}. Doing the same calculations for $\ket{\ell}\ket{j}$ with $\ell=1, 2,$ and $3$ as in the proof of Theorem~\ref{thm:block-encoding-laplace-1d}, one can obtain the other three columns and show that
\begin{equation*}
    U = \pmat{
    \laplaceOnehScaled & \DOnehScaledTwo & \DOnehScaledTwo & \SOnehScaled
    \\
    \DOnehScaledTwo & \laplaceOnehScaled & \SOnehScaled & \DOnehScaledTwo
    \\
    \DOnehScaledTwo & \SOnehScaled & \laplaceOnehScaled & \DOnehScaledTwo 
    \\
    \SOnehScaled & \DOnehScaledTwo & \DOnehScaledTwo & \laplaceOnehScaled
    } \ .
\end{equation*} 
Here, $\DOnehScaledTwo = \tfrac{h}{2} \DOneh$ is a scaled finite difference approximation to $\tfrac{d}{dx}$ and $\SOnehScaled = \tfrac{1}{2h} \SOneh$ is a scaled version of the trapezoidal quadrature rule that approximates $\int_0^1 \cdot \; dx$. We have
\begin{equation}
    \DOneh
    = \tfrac{1}{2 h}
    \arraycolsep=1.4pt
    \pmat{
    0 & 1  &        &         &       & -1            
    \\
    -1  & 0 & 1
    \\
       &    & \ddots & \ddots & \ddots
    \\
       &    &        &    1   &  0    & 1
    \\
    -1  &    &        &        &  1     & 0
    } \ ,
    \qquad
    \SOneh
    = \tfrac{h}{2}
    \pmat{
    1 & 2  &        &         &       & 1
    \\
    1  & 2 & 1
    \\
       &    & \ddots & \ddots & \ddots
    \\
       &    &        &    1   &  2    & 1
    \\
    1  &    &        &        &  1     & 2
    } \ .
    \label{eq:doneh}
\end{equation}
In Figure~\ref{fig:heat-map-laplace-1d} we give a heatmap plot of the entries of the unitary $U$.
\begin{figure}
    \centering
    \begin{subfigure}{0.4\textwidth}
        \includegraphics[width=\heatmapPlotWidth]{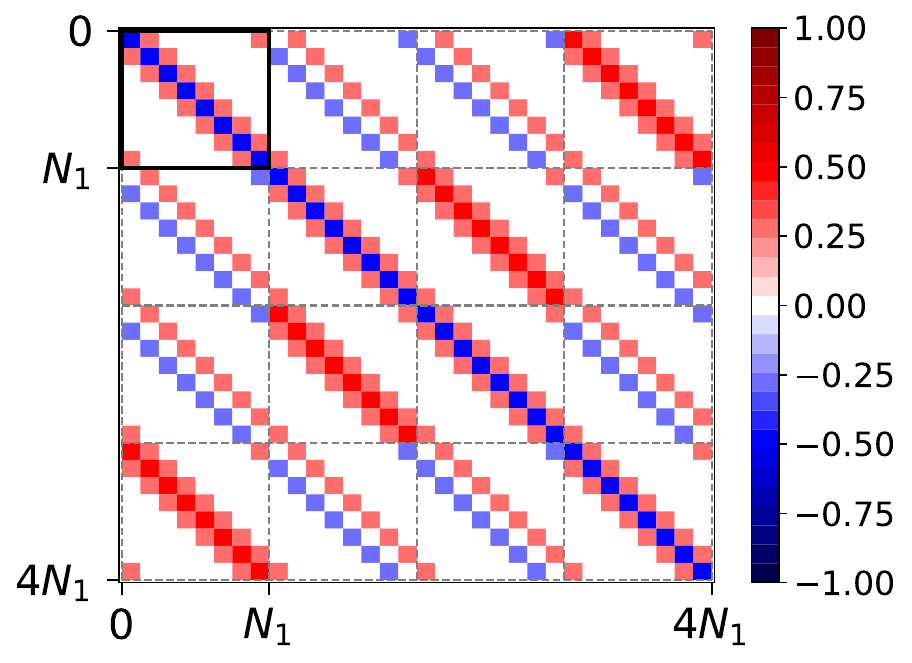}
        \caption{Entire matrix $U$.}
    \end{subfigure}
    \qquad
    \begin{subfigure}{0.5\textwidth}
        \includegraphics[width=\heatmapPlotWidth]{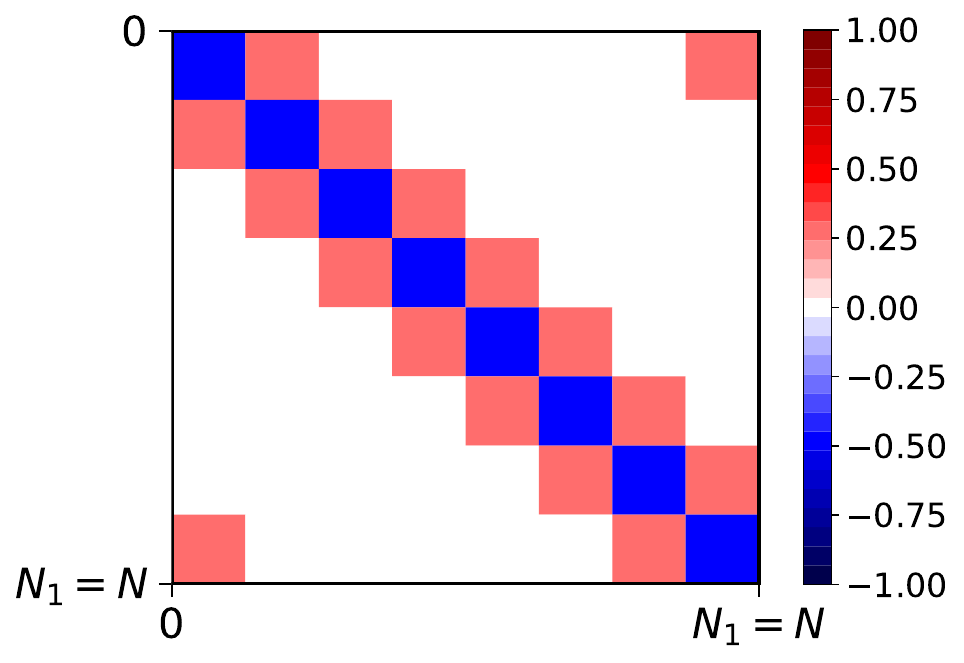}
        \caption{$(1,1)$-block of $U$. This block corresponds to $\laplaceOnehScaled$.}
    \end{subfigure}
    \caption{
     Heat map plot of the entries of the matrix $U$ in \eqref{eq:theorem-encoding-laplace-1d-1} in Theorem~\ref{thm:block-encoding-laplace-1d} for $\numSytemBits=3$, which corresponds to 
     $\numDofQC_1 = \numDofQC = 2^3 = 8$ 
     grid points. The entire matrix is displayed in (a), where $\laplaceOnehScaled$ is encoded in the $(1,1)$-block marked with bold lines. A zoom into the $(1,1)$-block is given in (b).}
    \label{fig:heat-map-laplace-1d}
\end{figure}
\subsection{Comparison with other block encodings}
\label{sec:other-block-encodings}
In \cite{Camps.2022} it is shown that real matrices $A$ of the form 
\begin{equation*}
    A
    = 
    \pmat{
    a_0 & a_{-1}  &        &                & a_{1}            
    \\
    a_{1}  & a_0 & a_{-1}
    \\
        & \ddots & \ddots & \ddots
    \\
        &        &    a_{1}   &  a_0   & a_{-1}
    \\
    a_{-1}     &        &        &  a_{1}     & a_0
    } \ ,
\end{equation*}
with $\norm{A}_2 \le 1$ and $a_0 > 0$, can be encoded by the circuit
\begin{subequations}
\label{eq:camps-circuit-laplace-1d}
\begin{equation}
    \begin{array}{c}
    \Qcircuit @C=1em @R=.5em {
\lstick{\ket{j}} 
& {/} \qw 
& \qw
& \qw
& \qw
& \gate{\ShiftMinus} 
& \gate{\ShiftPlus}
& \qw 
& \qw 
\\
\lstick{\ket{\ell_1}} 
& \gate{H} 
& \ctrlo{1}
& \ctrlo{1}
& \ctrl{1}
& \ctrl{-1}
& \qw 
& \gate{H} 
& \qw
\\
\lstick{\ket{\ell_0}} 
& \gate{H} 
& \ctrlo{1}
& \ctrl{1}
& \ctrlo{1}
& \qw 
& \ctrl{-2} 
& \gate{H} 
& \qw
\\
\lstick{\ket{a}}
& \qw
& \gate{RY(\theta_0)}
& \gate{RY(\theta_1)}
& \gate{RY(\theta_2)}
& \qw
& \qw
& \qw
& \qw
}
    \end{array}
    \label{eq:camps-circuit-laplace-1d-1}
\end{equation}
Here, the angles $\theta_k$ are defined by
\begin{equation}
    \theta_0 = 2 \arccos(a_0-1) \ ,
    \quad
    \theta_1 = 2 \arccos a_{1} \ ,
    \quad
    \theta_2 = 2 \arccos a_{-1} \ .
     \label{eq:camps-circuit-laplace-1d-2}
\end{equation}
In fact, the circuit \eqref{eq:camps-circuit-laplace-1d} yields the block encoding
\begin{equation}
    U = \pmat{
        \tfrac14 A & \ast
        \\
        \ast & \ast
    } \ .
    \label{eq:camps-circuit-laplace-1d-3}
\end{equation}
\end{subequations}
Choosing $a_0 = \tfrac12$ and $a_{1} = a_{-1} = -\tfrac14$ allows to encode $- \tfrac14 \laplaceOnehScaled$. This corresponds to a block encoding of $\laplaceOnehScaled$ in the sense of \eqref{eq:definition-block-encoding} with $\numAncillaBits=3$ and $\scalingFactorBE=-\tfrac{1}{4}$. In Figure~\ref{fig:heat-map-laplace-1d-camps} we provide a heatmap plot of this block encoding.

Let us point out the following differences of this block encoding in contrast to the one proposed in Theorem~\ref{thm:block-encoding-laplace-1d}. First, due to the positivity condition on $a_{0}$, only the negative version of $\laplaceOnehScaled$ can be encoded. Secondly, by \eqref{eq:definition-block-encoding-4} the scaling factor $\scalingFactorBE=-\tfrac{1}{4}$ decreases the success probability by a factor of 16. And thirdly, the additional ancillary qubit and the multi-controlled $RY$ gates increase the required quantum computing resources. This will further be discussed in Section~\ref{sec:resource-estimates}.
\begin{figure}
    \centering
    \begin{subfigure}{0.4\textwidth}
        \includegraphics[width=\heatmapPlotWidth]{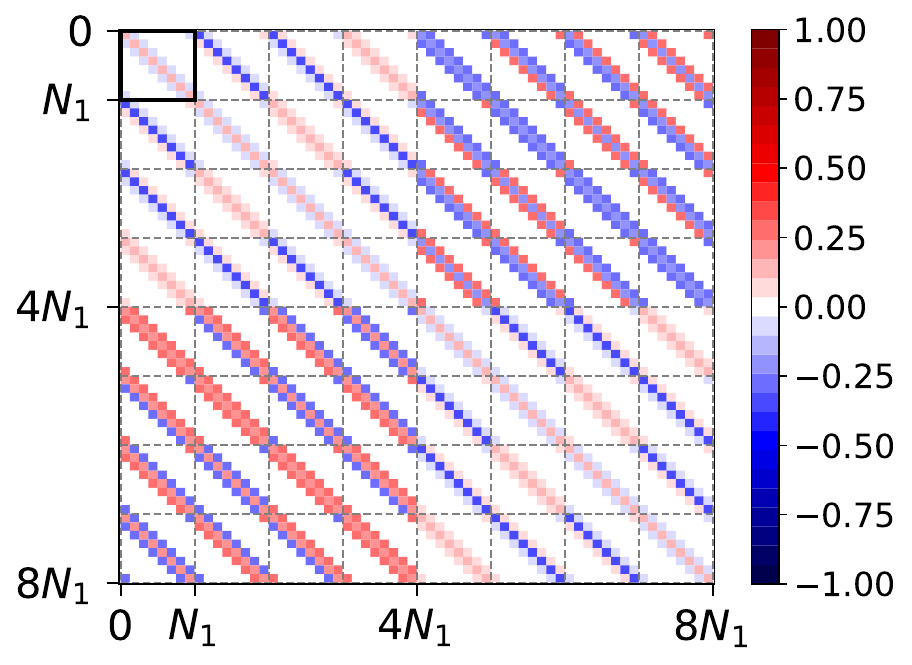}
        \caption{Entire matrix $U$.}
    \end{subfigure}
    \qquad
    \begin{subfigure}{0.5\textwidth}
        \includegraphics[width=\heatmapPlotWidth]{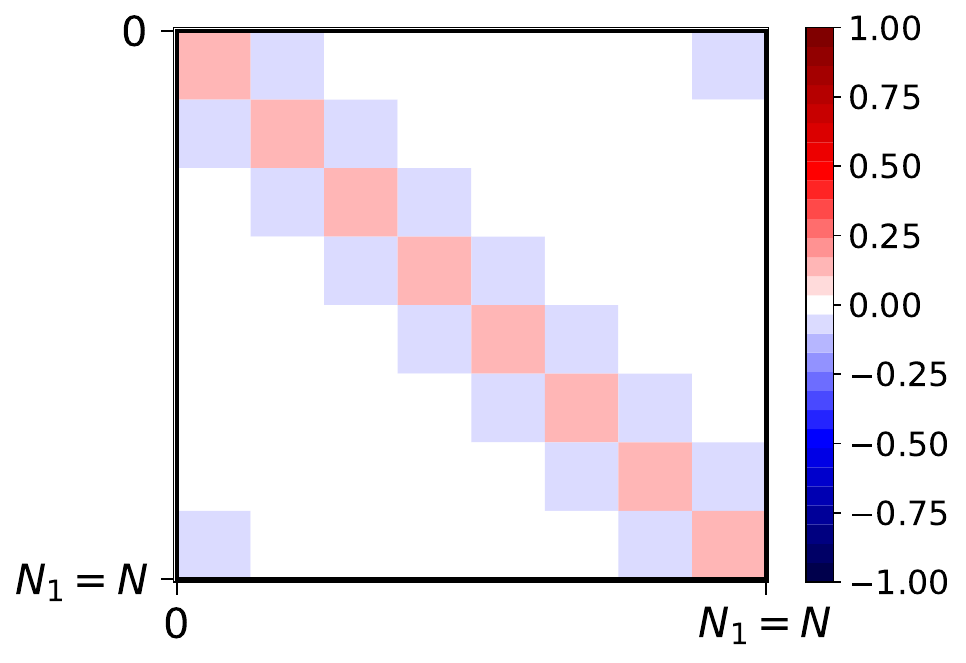}
        \caption{$(1,1)$-block of $U$. This corresponds to $- \tfrac{1}{4} \laplaceOnehScaled$.}
    \end{subfigure}
    \caption{Heat map plot of the entries of the matrix $U$ in \eqref{eq:camps-circuit-laplace-1d-3} for $\numSytemBits=3$, which corresponds to 
    $\numDofQC_1 = \numDofQC = 2^3 = 8$ 
    grid points. 
    The entire matrix is displayed in (a), where $- \tfrac{1}{4} \laplaceOnehScaled$ is encoded in the $(1,1)$-block marked with bold lines. A zoom into the $(1,1)$-block is given in (b).}
    \label{fig:heat-map-laplace-1d-camps}
\end{figure}
\section{Block Encoding of Arbitrary Dimensional Laplace}
\label{sec:block-encoding-laplace}
In the next theorem we show a block encoding of $\laplaceDhScaled$ for arbitrary dimensions $\numDim$. Interestingly, a sub-normalization scaling is applied to $\laplaceDhScaled$ in the cases when $\numDim$ is a not power of two, $\numDim \neq 2^{\numAncillaKBits}$ for some $\numAncillaKBits \in \N$.
\begin{theorem}
\label{thm:block-encoding-laplace-arbitrary-dim}
Let $\numDim > 1$. Define $\numAncillaKBits = \lceil \log \numDim \rceil$ and $\numAncillaKDof = 2^{\numAncillaKBits}$. Let $\ket{k}$ be an ancilla register with $\numAncillaKBits$ qubits and $H_{\numAncillaKBits} = H^{\otimes \numAncillaKBits}$. Then, the circuit 
\begin{subequations}
\label{eq:circuit-laplace-arbitrary-dim}
\begin{equation}
    \centerline{
    \Qcircuit @C=1em @R=.5em {
    \lstick{\ket{\jind{0}}} 
    & {/} \qw 
    & \qw \barrier[-1.2em]{5}
    & \qw 
    & \qw
    & \gate{\ShiftMinus} 
    & \gate{\ShiftPlus} 
    & \qw 
    & \qw
    & \qw 
    & \qw \barrier[-.2em]{5}
    & \qw
    & \qw
    \\
    \push{\vdots}
    & 
    & 
    & 
    & 
    & 
    & 
    & \push{\ddots}
    &           
    & 
    \\
    \lstick{\ket{\jind{\numDim-1}}} 
    & {/} \qw 
    & \qw 
    & \qw 
    & \qw 
    & \qw 
    & \qw
    & \qw
    & \gate{\ShiftMinus} 
    & \gate{\ShiftPlus} 
    & \qw
    & \qw 
    & \qw
    \\
    \lstick{\ket{\ell_1}} 
    & \qw 
    & \qw 
    & \gate{H} 
    & \gate{Z}
    & \ctrlo{-3} 
    & \qw
    & \push{\ \dots \ } \ghost{\dots} 
    & \ctrlo{-1} 
    & \qw
    & \gate{H}
    & \qw 
    & \qw
    \\
    \lstick{\ket{\ell_0}} 
    & \qw 
    & \qw 
    & \gate{H} 
    & \gate{Z}
    & \qw
    & \ctrl{-4} 
    & \push{\ \dots \ } \ghost{\dots} 
    & \qw
    & \ctrl{-2}
    & \gate{H} 
    & \qw 
    & \qw
    \\
    \lstick{\ket{k}} 
    & {/} \qw
    & \gate{H_{\numAncillaKBits}} 
    & \qw 
    & \qw
    & \measure{0} \qwx[-2] 
    & \measure{0} \qwx[-2] 
    & \push{\ \dots \ } \ghost{\dots} 
    & \measure{\scriptstyle\numDim-1} \qwx[-2] 
    & \measure{\scriptstyle\numDim-1} \qwx[-2] 
    & \qw
    & \qw
    & \gate{H_{\numAncillaKBits}} 
    & \qw
    \\
    \\
    \\
    & \text{part 1} 
    &
    & 
    & 
    &
    &
    & \text{part 2}
    &
    &
    &
    &
    & \text{part 3}        
    }
}
    \label{eq:circuit-laplace-arbitrary-dim-1}
\end{equation}
is a block encoding of $\laplaceDhScaled$ in the sense of \eqref{eq:definition-block-encoding} with 
\begin{equation}
    \scalingFactor
    = \frac{\numDim}{\numAncillaKDof}
    \qquad
    \text{and}
    \qquad
    \numAncillaBits=2+\numAncillaKBits \ .
    \label{eq:circuit-laplace-arbitrary-dim-3}
\end{equation}
This means, we have
\begin{equation}
    U
    = \pmat{
        \scalingFactor \laplaceDhScaled 
        & \ast
        \\ 
        \ast 
        &
        \ast
    }
    \in \R^{4 \numAncillaKDof \numDof \times 4 \numAncillaKDof \numDof} 
     \ .
    \label{eq:circuit-laplace-arbitrary-dim-2}
\end{equation}
If $U$ is applied to a state $\ket{v}$ satisfying the assumptions of Lemma~\ref{lemma:success-prob-laplace} the success probability scales as
\begin{equation}
    \bigl\| \scalingFactor \laplaceDhScaled \ket{v} \bigr\|_2^2
    \sim
     \frac{h^4}{16 \numAncillaKDof^2}
    \frac{
            \norm{\laplaceD v}_{L^2(\domain_\numDim)}^2
        }{
            \norm{v}_{L^2(\domain_\numDim)}^2
    } \ .
    \label{eq:circuit-laplace-power-2d-4}
\end{equation}
\end{subequations}
\end{theorem}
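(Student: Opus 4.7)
The plan is to mimic the proof of Theorem~\ref{thm:block-encoding-laplace-1d}: I will track the action of the circuit on a basis input $\ket{0}\ket{j}$, with $\ket{0}$ denoting the all-zero ancilla and $\ket{j}=\ket{\jind{0}}\cdots\ket{\jind{\numDim-1}}$ the spatial register, and then read off the encoded block by contracting the output with $\bra{0}$ on the ancilla registers. Throughout, I will write $\ket{j\pm e_k}$ as shorthand for $\ket{\jind{0}}\cdots\ket{\jind{k}\pm 1}\cdots\ket{\jind{\numDim-1}}$ (addition taken mod $\numDofOne$ on the $k$-th component).

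Part~1 is immediate from the 1D case: the Hadamards on the $\ket{k}$ register produce the uniform superposition $\tfrac{1}{\sqrt{\numAncillaKDof}}\sum_{k=0}^{\numAncillaKDof-1}\ket{k}$, while the $HZ$ layer on $\ket{\ell_0\ell_1}$ produces $\ket{-}\ket{-}=\tfrac12(\ket{0}-\ket{1}-\ket{2}+\ket{3})$ exactly as in the derivation leading to \eqref{eq:proof-block-encoding-laplace-1d}. So after part~1 the state is
\begin{equation*}
    \frac{1}{2\sqrt{\numAncillaKDof}}\sum_{k=0}^{\numAncillaKDof-1}\ket{k}\bigl(\ket{0}-\ket{1}-\ket{2}+\ket{3}\bigr)\ket{j}\ .
\end{equation*}

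Part~2 I would analyze by splitting on $k$. For $k\in\{0,\dots,\numDim-1\}$ the pair of multi-controlled shifts acting on the $k$-th spatial register fires exactly when the ancilla is in $\ket{k}$, and on that register it realizes the transformation of Lemma~\ref{lemma:l-r-identity-1d-0-3}; re-using \eqref{eq:proof-block-encoding-laplace-1d} applied to the $\jind{k}$ slot turns the $k$-slice into $\ket{0}\ket{j-e_k}-\ket{1}\ket{j}-\ket{2}\ket{j}+\ket{3}\ket{j+e_k}$. For $k\in\{\numDim,\dots,\numAncillaKDof-1\}$ no multi-control ever fires, so part~2 acts as the identity on that slice.

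Part~3 is handled by the identity $\bra{0}H=\bra{+}$: contracting the final state against $\bra{0}$ on the ancillas is the same as contracting the post-part-2 state against $\bra{+}^{\otimes\numAncillaKBits}\otimes\bra{+}^{\otimes 2}$. A short computation gives $\bra{+}\bra{+}(\ket{0}-\ket{1}-\ket{2}+\ket{3})=0$, so the $k\geq\numDim$ contributions vanish --- this is the crucial cancellation that keeps the encoding exact when $\numDim$ is not a power of two. For $k<\numDim$ the same projection yields $\tfrac12(\ket{j-e_k}-2\ket{j}+\ket{j+e_k})$, and $\bra{+}^{\otimes\numAncillaKBits}\ket{k}=1/\sqrt{\numAncillaKDof}$ contributes an additional factor. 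Summing and using \eqref{eq:laplace-1d-applied-to-canonical-basis} and \eqref{eq:laplaceDhScaled} yields
\begin{equation*}
    \frac{1}{4\numAncillaKDof}\sum_{k=0}^{\numDim-1}\bigl(\ket{j-e_k}-2\ket{j}+\ket{j+e_k}\bigr)=\frac{\numDim}{\numAncillaKDof}\laplaceDhScaled\ket{j}=\scalingFactor\laplaceDhScaled\ket{j}\ ,
\end{equation*}
which is \eqref{eq:circuit-laplace-arbitrary-dim-2}. The success-probability estimate \eqref{eq:circuit-laplace-power-2d-4} then follows by multiplying $\scalingFactor^2=\numDim^2/\numAncillaKDof^2$ with the scaling \eqref{eq:success-prob-laplace-2} of Lemma~\ref{lemma:success-prob-laplace}; the two factors of $\numDim$ cancel one of the $(4\numDim)^2$ hidden in $C(\numDim,v)^2$, leaving precisely the claimed $h^4/(16\numAncillaKDof^2)$ prefactor. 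The main obstacle is the bookkeeping of the two regimes $k<\numDim$ versus $k\geq\numDim$: the exactness of the encoding hinges on the vanishing identity $\bra{+}\bra{+}(\ket{0}-\ket{1}-\ket{2}+\ket{3})=0$, and it is exactly this cancellation that makes the sub-normalization factor come out as the clean value $\scalingFactor=\numDim/\numAncillaKDof$ rather than something less transparent.
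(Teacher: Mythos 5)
Your proposal is correct and follows essentially the same route as the paper: trace a computational basis state through the three parts, use Lemma~\ref{lemma:l-r-identity-1d-0-3} on the slice $k<\numDim$, and observe that the slices $k\ge\numDim$ drop out of the $\ket{0}$-ancilla component (your identity $\bra{+}\bra{+}(\ket{0}-\ket{1}-\ket{2}+\ket{3})=0$ is exactly the paper's observation that these slices end with the $\ell$-register in $\ket{3}$). The bookkeeping of the prefactor $\numDim/\numAncillaKDof$ and the success-probability scaling both check out.
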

\begin{proof}
We prove that \eqref{eq:definition-block-encoding-2} is satisfied for any basis vector $\ket{j} = \ketSimple{\jind{\numDim-1}} \dots \ketSimple{\jind{1}} \ketSimple{\jind{0}}$, $\jind{d} = 0, 1, \dots, \numDofOne-1$. The three parts of the quantum circuit \eqref{eq:circuit-laplace-arbitrary-dim-1} yield the following transformations:
\begin{align*}
    \ket{0} \ket{0} \ket{j}
    \to&
    \tfrac{1}{\sqrt{\numAncillaKDof}} 
    \sum_{k=0}^{\numAncillaKDof-1}
    \ket{k} \ket{0} \ket{j}
    \\
    \to&
    \tfrac{1}{\sqrt{\numAncillaKDof}}
    \ket{0} \ket{0}
    \ketSimple{\jind{\numDim-1}} 
    \dots 
    \ketSimple{\jind{1}}
    \laplaceOnehScaled \ketSimple{\jind{0}}
    \\
    &
    + \tfrac{1}{\sqrt{\numAncillaKDof}}
    \ket{1} \ket{0}
    \ketSimple{\jind{\numDim-1}}
    \dots 
    \laplaceOnehScaled \ketSimple{\jind{1}}
    \ketSimple{\jind{0}}
    \\
    &
    + \dots
    \\
    &
    + 
    \tfrac{1}{\sqrt{\numAncillaKDof}}
    \ket{\numDim-1} \ket{0}
    \laplaceOnehScaled \ketSimple{\jind{\numDim-1}} 
    \dots 
    \ketSimple{\jind{1}}
    \ketSimple{\jind{0}}
    \\
    &
    +
    \tfrac{1}{\sqrt{\numAncillaKDof}}
    \sum_{k=0}^{\numDim-1}
    \sum_{\ell=1}^3
    \ket{k} \ket{\ell} (\dots)
    \\
    &
    +
    \tfrac{1}{\sqrt{\numAncillaKDof}}
    \sum_{k=\numDim}^{\numAncillaKDof-1}
    \ket{k} \ket{3} \ket{j}
    \\
    \to&
    \tfrac{1}{\sqrt{\numAncillaKDof}}
    \ket{0} \ket{0} 
    \Bigl(
    \ketSimple{\jind{\numDim-1}}
        \dots 
        \ketSimple{\jind{1}}
        \laplaceOnehScaled \ketSimple{\jind{0}}
        +
        \dots
        +
        \laplaceOnehScaled \ketSimple{\jind{\numDim-1}}
        \dots 
        \ketSimple{\jind{1}}
        \ketSimple{\jind{0}}
    \Bigr)
    + \ket{\perp}
    \\
    &=
    \tfrac{\numDim}{\numAncillaKDof} 
    \ket{0} \ket{0} \laplaceDhScaled \ket{j}
    + \ket{\perp} \ .
\end{align*}
Here, part 2 follows by the same calculation as for Theorem~\ref{thm:block-encoding-laplace-1d} and part 3 by the identity \eqref{eq:laplaceDhScaled} and the definition of $\scalingFactor$.
\end{proof}
\begin{figure}
    \centering
    \includegraphics{
        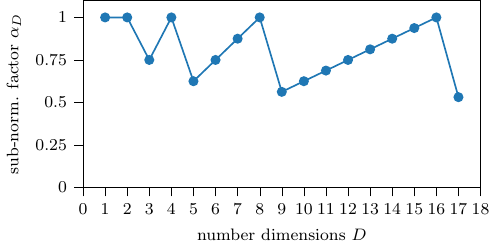}
    \caption{Number of dimensions $\numDim$ versus sub-normalization factor $\scalingFactor$. In particular, we have $\scalingFactor=1$ for $\numDim = 1$ and $2$ and $\scalingFactor=3/4$ for $\numDim=3$.}
    \label{fig:scaling-factor}
\end{figure}
%
In Figure~\ref{fig:scaling-factor} we show the value of the scaling factor $\scalingFactor$ versus the dimension $\numDim$.
\begin{figure}
    \centering
    \begin{subfigure}{0.4\linewidth}
        \includegraphics[width=\heatmapPlotWidth]{
        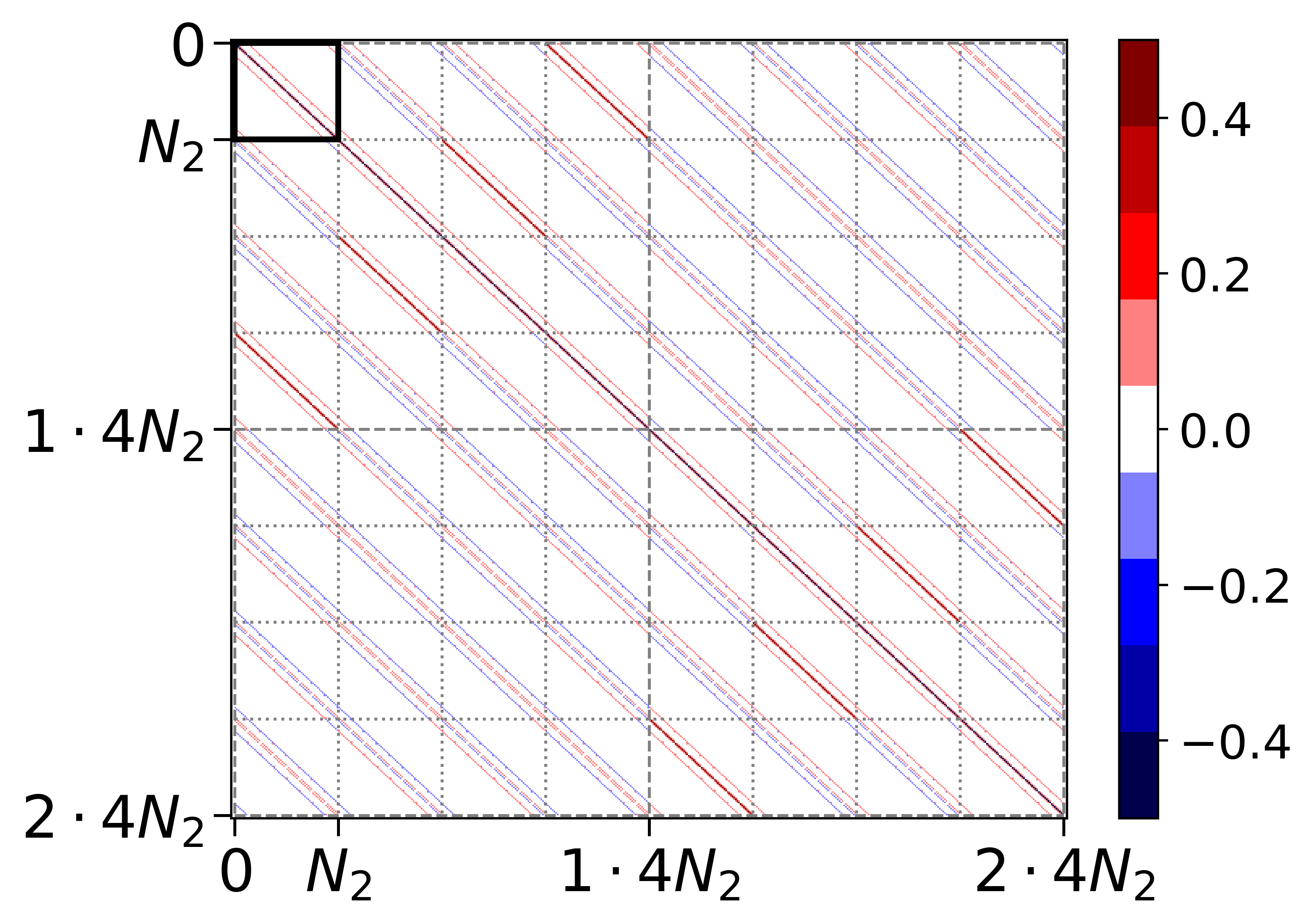}
        \caption{Entire matrix $U$.}
    \end{subfigure}
    \begin{subfigure}{0.45\linewidth}
    \includegraphics[width=\heatmapPlotWidth]{
        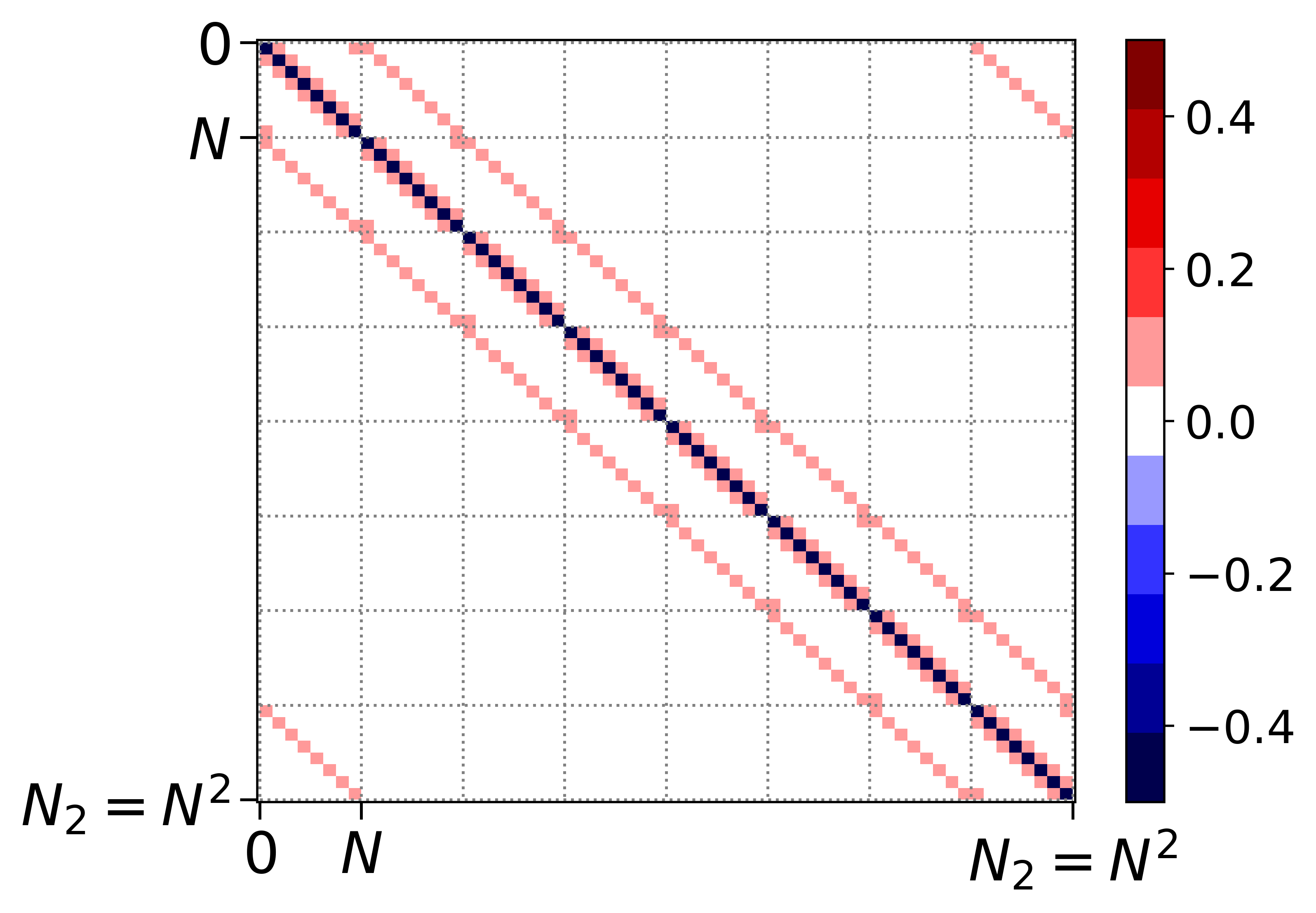}
        \caption{(1,1)-block of $U$. This corresponds to $\laplaceTwohScaled$.}
    \end{subfigure}
    \caption{Heat map plot of the entries of the matrix $U$ in \eqref{eq:circuit-laplace-arbitrary-dim-2} in Theorem~\ref{thm:block-encoding-laplace-arbitrary-dim} for $\numDim=2$ and $\numSytemBits=3$. This yields $\numAncillaKDof=2$, $\scalingFactor = 1$, and 
    $\numDofQC_2 = \numDofOne^2 = (2^3)^2 = 64$.
    The entire matrix is displayed in (a), where $\laplaceTwohScaled$ is encoded in the (1,1)-block marked with bold lines. A zoom into the (1,1)-block is given in (b).}
    \label{fig:heatmap-plot-laplace-2d}
\end{figure}
Heatmap plots of the entries of the unitary $U$ of \eqref{eq:circuit-laplace-arbitrary-dim-2} for $\numDim=2$ and $\numDim=3$ are given in Figure~\ref{fig:heatmap-plot-laplace-2d} and \ref{fig:heatmap-plot-laplace-3d}, respectively.
\begin{figure}
    \centering
    \begin{subfigure}{0.4\linewidth}
        \includegraphics[width=\heatmapPlotWidth]{
        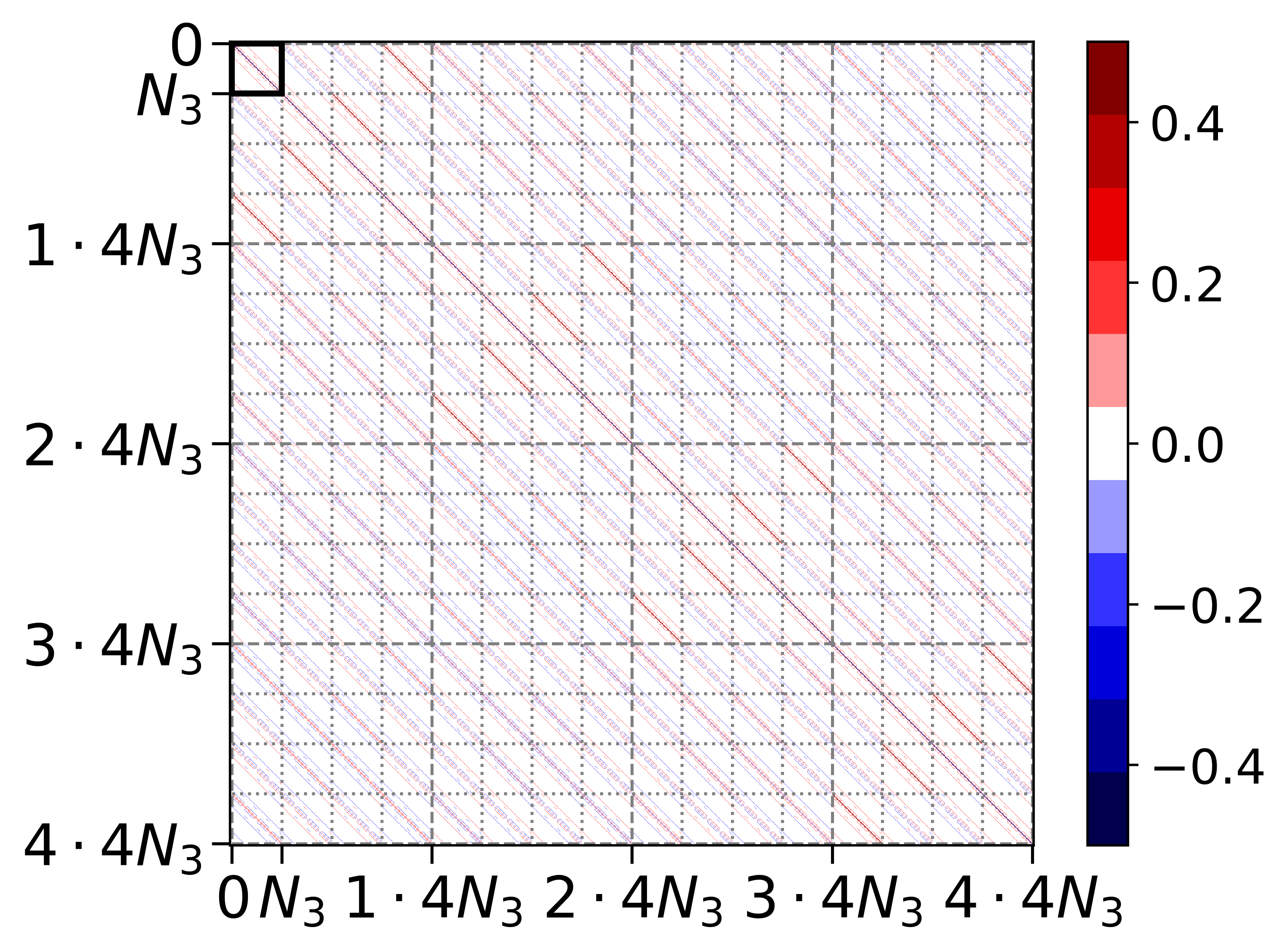}  
        \caption{Entire matrix $U$.}
    \end{subfigure}
    \begin{subfigure}{0.45\linewidth}
        \includegraphics[width=\heatmapPlotWidth]{
        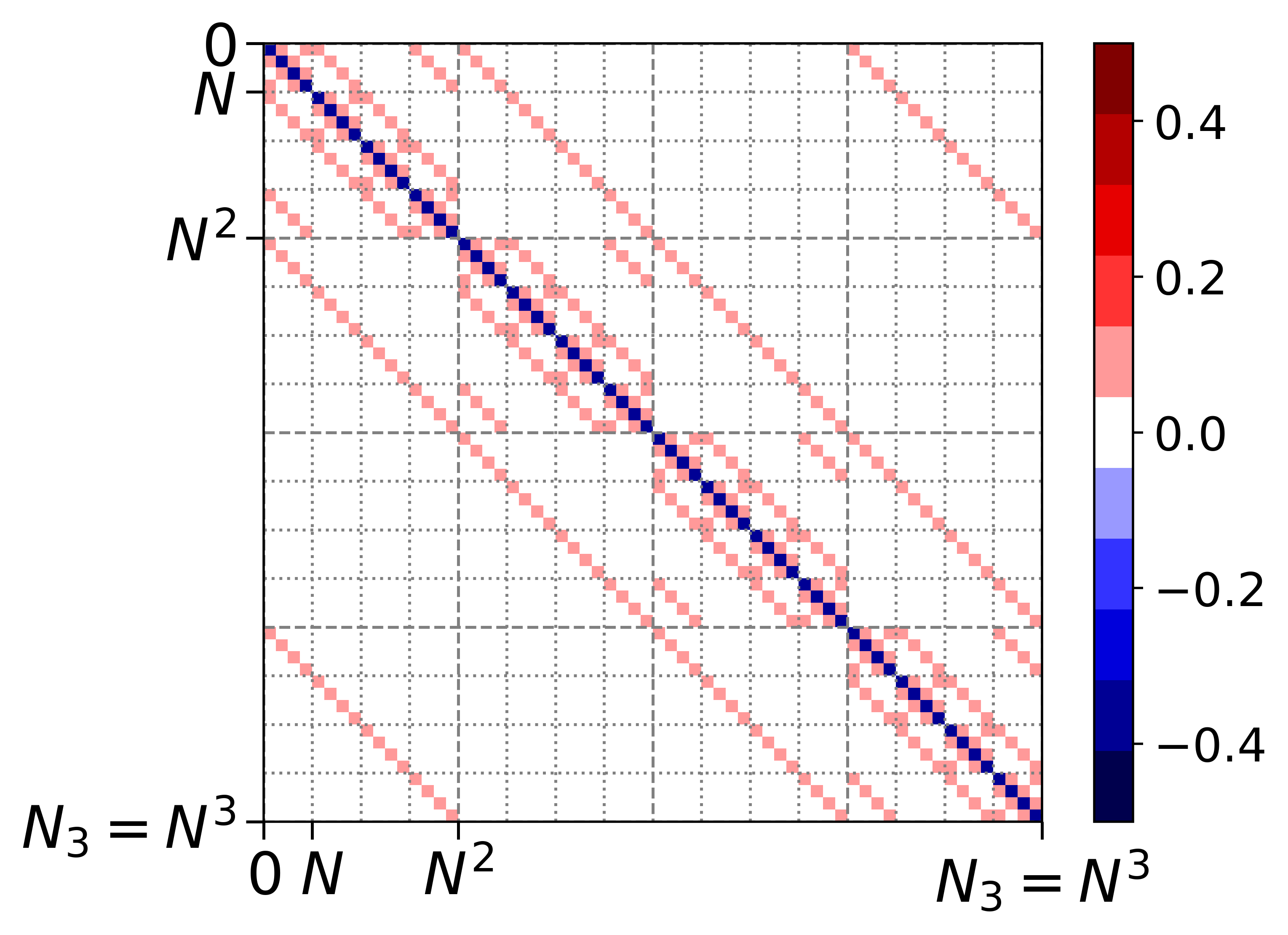}
        \caption{(1,1)-block of $U$. This corresponds to $\tfrac34 \laplaceThreehScaled$.}
    \end{subfigure}
    \caption{Heat map plot of the entries of the matrix $U$ in \eqref{eq:circuit-laplace-arbitrary-dim-2} in Theorem~\ref{thm:block-encoding-laplace-arbitrary-dim} for $\numDim=3$ and $\numSytemBits=2$. This yields $\numAncillaKDof=4$, $\scalingFactor = 3/4$, and 
    $\numDofQC_3 = \numDofOne^3 = (2^2)^3 = 64$. 
    The entire matrix is displayed in (a), where $\tfrac34 \laplaceThreehScaled$ is encoded in the (1,1)-block marked with bold lines. A zoom into the (1,1)-block is given in (b).}
    \label{fig:heatmap-plot-laplace-3d}
\end{figure}

\section{Numerical Experiments}
\label{sec:numerical-experiments}
In this section we show numerical experiments to illustrate the scaling of the success probability and of the resource requirements of the block encoding.
\subsection{One Dimensional Laplacian}
\label{sec:numerical-experiments-one-dim}
Consider the functions
\begin{equation}
    v_1(x)
    = \sin (2 \pi x)
    \qquad
    \text{and}
    \qquad
    v_2(x) = \cos(6 \pi x) \ .
    \label{eq:num-exp-laplace-1d-1}
\end{equation}
We have $\laplaceOne v_1 = - (2 \pi)^2 v_1$ and $\laplaceOne v_2 = - (6 \pi)^2 v_2$. By \eqref{eq:theorem-encoding-laplace-1d-3} the success probability of the block encoding \eqref{eq:theorem-encoding-laplace-1d-1} scales with $C_1 h^4$ for $v_1$ and $C_2 h^4$ for $v_2$, where
\begin{equation}
    C_1
    = \frac{(2 \pi)^4}{16}
    = \pi^4 \ ,
    \qquad
    C_2
    = \frac{(6 \pi)^4}{16}
    = 81 \pi^4 \ .
    \label{eq:num-exp-laplace-1d-2}
\end{equation}
For the block encoding \eqref{eq:camps-circuit-laplace-1d} proposed in \cite{Camps.2022} the constants are a factor $16$ smaller, as we detailed in Section~\ref{sec:other-block-encodings}. This behavior is confirmed in Figure~\ref{fig:num-exp-laplace-1d-success-prob}, where we plot the success probability for both block encodings for $v_1$ and $v_2$ for different grid widths $h$. In particular, we see the factor $16$ improvement of the block encoding presented in this paper and the decrease with $h^4$ as $h \to 0$.
\begin{figure}
    \centering
    \begin{subfigure}{0.95\textwidth}
    \centering
        \includegraphics{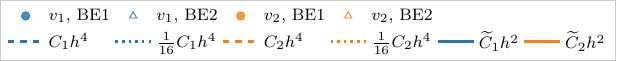}
        \caption*{}
    \end{subfigure}
    \begin{subfigure}{0.4\textwidth}
        \includegraphics{
            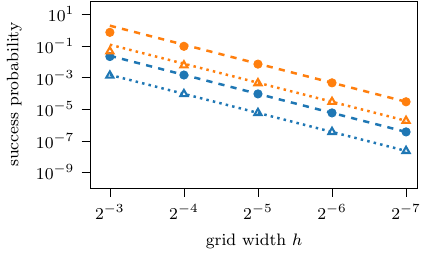}
        \caption{Success probability of block encoding.}
        \label{fig:num-exp-laplace-1d-success-prob}
    \end{subfigure}
    \qquad \ \ \ \
    \begin{subfigure}{0.4\textwidth}
        \includegraphics{
            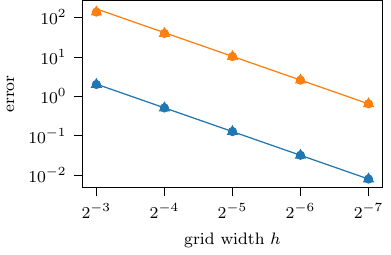}
        \caption{Error due to finite difference approximation.}
        \label{fig:num-exp-laplace-1d-error}
    \end{subfigure}
    \caption{Block encoding of the one dimensional, discretized Laplacian $\laplaceOneh$. We compare the encoding presented in this paper (labeled as BE1 and displayed with circle markers in the plots) and the one of \cite{Camps.2022} (BE2, triangle markers), see also \eqref{eq:theorem-encoding-laplace-1d} and \eqref{eq:camps-circuit-laplace-1d}, respectively. (a) Success probability \eqref{eq:definition-block-encoding-4} of BE1 and BE2 for functions $v_1$ (blue markers) and $v_2$ (orange markers). (b) Error \eqref{eq:max-error} due to the finite difference discretization. The functions $v_i$ are defined in  \eqref{eq:num-exp-laplace-1d-1} and the constants $C_i$ and $\widetilde{C}_i$ are given in \eqref{eq:num-exp-laplace-1d-2}.}
    \label{fig:num-exp-laplace-1d}
\end{figure}
Moreover, we see in Figure~\ref{fig:num-exp-laplace-1d-success-prob} that we have a higher success probability for the higher oscillatory function $v_2$ compared to $v_1$. However, this must be contrasted with the fact that the error
\begin{equation}
    e_\mathrm{max}(v)
    = \Bigl\|
    \lambda_{1, \mathrm{max}} 
    \norm{\vecv}_2 
    \tfrac{1}{\scalingFactorBE }\laplaceOnehScaled \ket{v} 
    - \laplaceOne \vecv
    \Bigr\|_{\infty}
    \label{eq:max-error}
\end{equation}
of the finite difference approximation scales with the fourth derivative $v^{(4)}$, 
\begin{equation*}
    e_\mathrm{max}(v) 
    \le \tfrac{h^2}{12} \max_{x \in [0, 1]} |v^{(4)}(x)| \ ,
\end{equation*}
see \eqref{eq:taylor-expansion}.
In our case, this means $e_\mathrm{max}(v_i) \le \widetilde{C}_i h^2$ with
\begin{equation}
    \widetilde{C}_1
    = \frac{(2\pi)^4}{12}
    = \frac{4\pi^4}{3} \ ,
    \qquad
    \widetilde{C}_2
    = \frac{(6\pi)^4}{12}
    = 108 \pi^4 \ .
    \label{eq:max-error-constants}
\end{equation}
So, to obtain the same finite difference error, the function $v_2$ needs a much finer grid than $v_1$, see Figure~\ref{fig:num-exp-laplace-1d-error}, which decreases the success probability.
\subsection{Multi-Dimensional Laplacian}
\label{sec:numerical-experiments-multi-dim}
For dimensions $\numDim=1, 2, 3,$ and $4$ let us consider the functions
\begin{equation}
    v_{\numDim}(\vecx) 
    = \prod_{d=0}^{\numDim-1}
    \sin( 2 \pi \vecxentry{d}) \ ,
    \label{eq:num-exp-2-laplace-multi-d-1}
\end{equation}
which satisfy $\laplaceD v_\numDim = (-1) \numDim (2 \pi)^2 v_\numDim$.
By \eqref{eq:circuit-laplace-power-2d-4}, the success probability of the block encoding \eqref{eq:circuit-laplace-arbitrary-dim-1} scales with $C_\numDim h^4$, where
\begin{equation*}
    C_\numDim
    = \frac{\numDim^2 (2\pi)^4}{16 \numAncillaKDof^2}
    = \frac{\pi^4 \numDim^2}{\numAncillaKDof^2} \ .
\end{equation*}
Thus, we have
\begin{equation}
    C_1 = C_2 = C_4 = \pi^4 \, ,
    \quad
    C_3 = \tfrac{9}{16} \pi^4 \, .
    \label{eq:num-exp-2-laplace-multi-d-2}
\end{equation}
In Figure~\ref{fig:num-exp-laplace-multi-d-experiment-2} we show the success probabilities for the functions $v_\numDim$ for different grid widths $h$. We visualize the dependence on $h$ by showing the number of grid points $\numDof = \numDofOne^\numDim = 1/h^\numDim$ on the $x$-axis.
\begin{figure}
    \centering
    \includegraphics{
        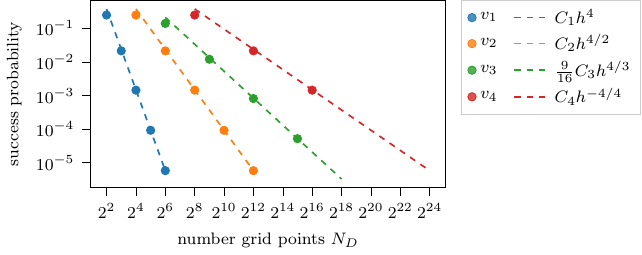}
    \caption{Number of grid points $\numDof = \numDofOne^\numDim = 2^{\numSytemBits \numDim}$ versus success probability \eqref{eq:definition-block-encoding-4} of encoding the $\numDim$ dimensional, discretized Laplacian $\laplaceDh$ with the block encoding \eqref{eq:circuit-laplace-arbitrary-dim-1} for dimensions $\numDim=1, 2, 3,$ and $4$ (blue, orange, green, and red markers). The functions $v_\numDim$ are defined in \eqref{eq:num-exp-2-laplace-multi-d-1} and the constants $C_\numDim$ are given in \eqref{eq:num-exp-2-laplace-multi-d-2}.}
    \label{fig:num-exp-laplace-multi-d-experiment-2}
\end{figure}
\subsection{Resource Estimates}
\label{sec:resource-estimates}
Because of the unavoidable presence of noise in quantum computers, a reliable execution of a quantum circuit is only possible if it is endowed with a quantum error correction protocol \cite{Nielsen.2012,Terhal.2015}. However, this protection comes with an overhead in quantum resources that needs to be taken into account when assessing the complexity of a quantum algorithm. One central aspect required to calculate this overhead is the number of $T$-gates that a quantum circuit contains when it is expressed in the universal gate set Clifford+$T$. In order to estimate the required number of $T$ gates, we implemented the block encoding circuits \eqref{eq:theorem-encoding-laplace-1d}, \eqref{eq:camps-circuit-laplace-1d} and \eqref{eq:circuit-laplace-arbitrary-dim} in \texttt{qualtran} \cite{Harrigan.2024}. The shift operators $\ShiftPlusMinus$ are implemented with the \texttt{AddK} class of \texttt{qualtran}. At the expense of additional auxiliary qubits, this implementation of the addition has the lowest known $T$ gate count and a logarithmic scaling $\mathcal{O}(\log \numDof)$ in the number of grid points  \cite{Draper.2006,Haner.2020}.

In Figure~\ref{fig:resource-estimates} we show the number of $T$ gates to encode $\laplaceDh$ for $\numDim = 1, 2,$ and $3$ with the block encoding \eqref{eq:circuit-laplace-arbitrary-dim}. We can clearly see the logarithmic scaling of the number of $T$ gates with the number of grid points $\numDof$.
In Figure~\ref{fig:resource-estimates-compare-camps} we show a comparison of the block encoding \eqref{eq:theorem-encoding-laplace-1d} presented in this paper and the block encoding \eqref{eq:camps-circuit-laplace-1d} from \cite{Camps.2022} to encode $\laplaceOneh$. Besides the higher success probability of the block encoding \eqref{eq:theorem-encoding-laplace-1d}, that we discussed in Section~\ref{sec:numerical-experiments-one-dim}, we also observe a lower $T$ gate complexity for this block encoding. The reason for this is that the circuit of \eqref{eq:camps-circuit-laplace-1d} has additional $RY$ gates that are not present in \eqref{eq:theorem-encoding-laplace-1d}.
\begin{figure}
    \centering
    \begin{subfigure}{0.95\textwidth}
        \centering
        \includegraphics{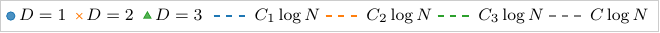}
        \caption*{}
    \end{subfigure}
    \begin{subfigure}[b]{0.4\linewidth}
        \centering
        \includegraphics{
            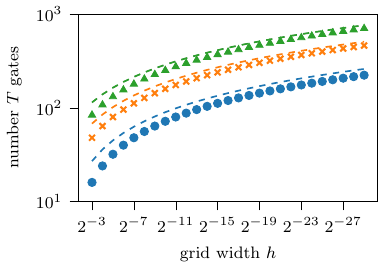}
        \caption{Number of $T$ gates versus grid width $h$.\\ \phantom{a}}
    \end{subfigure}
    \qquad
    \begin{subfigure}[b]{0.43\linewidth}
        \centering
        \includegraphics{
            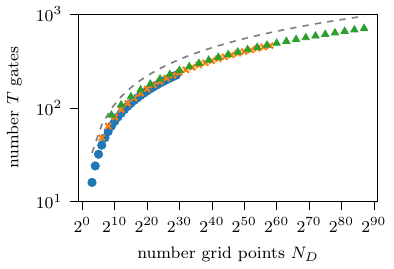}
        \caption{Number of $T$ gates versus number grid points $\numDof$.}
    \end{subfigure}
    \caption{Number of $T$ gates versus grid width $h$ (a) and versus number grid points $\numDof$ (b) for block encoding $\laplaceDh$ with the block encoding \eqref{eq:circuit-laplace-arbitrary-dim}. Note that we have  $\numDof = \numDofOne^\numDim = 1/h^\numDim$.  The constants are $C_1=9$, $C_2=17$, $C_3=25$, and $C=11$.}
    \label{fig:resource-estimates}
\end{figure}
\begin{figure}
    \centering
    \includegraphics{
        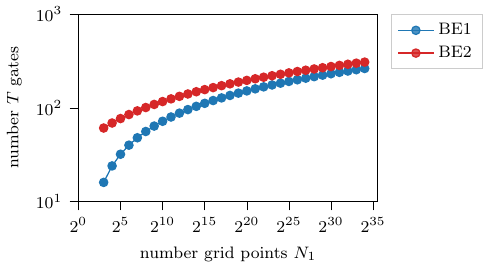}
    \caption{Number of $T$ gates versus number of grid points $\numDofQC_1 = \numDofOne = 1/h$ for encoding $\laplaceOneh$ with the block encoding \eqref{eq:theorem-encoding-laplace-1d} (blue markers) and  \eqref{eq:camps-circuit-laplace-1d} (red markers).}
    \label{fig:resource-estimates-compare-camps}
\end{figure}

\section{Conclusion}
In this paper, we have developed an efficient method for the block encoding of finite difference discretizations of the Laplacian operator with periodic boundary conditions. Our approach leverages the unique structure of finite difference methods, allowing us to construct explicit quantum circuits that improve existing encodings. We have analytically derived the scaling of the sub-normalization factor and the success probability of the block encoding in relation to the problem dimension, the grid width, and the regularity of the exact solution.
The results demonstrate that our method not only enhances the sub-normalization factor and consequently the success probability, but also reduces the resource requirements, making it a practical solution for quantum algorithms aiming for the solution of differential equations involving the Laplacian. The implications of our findings extend beyond the discrete Laplacian, as the techniques and insights gleaned from this research can be adapted to other differential operators. Future work will aim to incorporate other boundary conditions, as for example Dirichlet boundary conditions \cite{Camps.2022,Sunderhauf.2024,Kharazi.2025}, and to compare the block encoding presented here with approximative \cite{Camps_2022} and variational methods \cite{Kikuchi.2023,rullkotter2025}.

\section*{Acknowledgment}
The authors thank Leon Rullk\"otter for carefully proofreading this manuscript.
\\
This work was funded by the Ministry of Economic Affairs, Labour, and Tourism Baden-W\"urttemberg in the frame of the Competence Center Quantum Computing Ba\-den-W\"urt\-tem\-berg.

\appendix

\section{Block Encoding of First Order Differential Operators}
\label{sec:block-encoding-first-order}
The construction of the block encoding of the finite difference approximation of the Laplacian $\laplaceDhScaled$ can be transferred to other differential operators. In this appendix we show this for first order differential operators. For example, the circuit
\begin{subequations}
\begin{equation}
    \begin{array}{c}
    \Qcircuit @C=1em @R=.7em {
    \lstick{\ket{j}} 
    & {/} \qw 
    & \barrier[-1.5em]{1} \qw 
    & \gate{\ShiftMinus} 
    & \gate{\ShiftPlus} \barrier[-1.2em]{1} 
    & \qw 
    & \qw 
    \\
    \lstick{\ket{\ell}} 
    & \gate{H} 
    & \gate{Z} 
    & \ctrlo{-1}
    & \ctrl{-1}
    & \gate{H} 
    & \qw
    }

    \end{array}
    \label{eq:circuit-block-encoding-dx}
\end{equation}
encodes
\begin{equation}
    \DOnehScaled
    = h \DOneh 
\end{equation}
with $\numAncillaBits = 1$ and $\scalingFactorBE = 1$.
\end{subequations}
Here, $\DOneh$ is the central finite difference approximation to $\tfrac{d}{dx}$ defined in \eqref{eq:doneh}.
From \eqref{eq:circuit-block-encoding-dx} we can derive
\begin{subequations}
\begin{equation}
    \begin{array}{c}
     \Qcircuit @C=1em @R=.6em {
    \lstick{\ket{\jind{0}}} 
    & {/} \qw \barrier[-1.2em]{3}
    & \qw 
    & \qw 
    & \gate{\ShiftMinus} 
    & \gate{\ShiftPlus} 
    & \qw 
    & \qw 
    & \qw \barrier[-.5em]{3}
    & \qw
    & \qw 
    \\
    \lstick{\ket{\jind{1}}} 
    & {/} \qw 
    & \qw 
    & \qw 
    & \qw 
    & \qw 
    & \gate{\ShiftMinus} 
    & \gate{\ShiftPlus} 
    & \qw
    & \qw 
    & \qw
    \\
    \lstick{\ket{k}} 
    & \gate{H} 
    & \qw 
    & \qw
    & \ctrlo{-2} 
    & \ctrlo{-2}
    & \ctrl{-1} 
    & \ctrl{-1}
    & \qw 
    & \qw
    & \qw
    \\
    \lstick{\ket{\ell}} 
    & \qw 
    & \gate{H} 
    & \gate{Z}
    & \ctrlo{-1} 
    & \ctrl{-1}
    & \ctrlo{-1} 
    & \ctrl{-1}
    & \gate{H}
    & \qw 
    & \qw
}
    \end{array}
\end{equation}
which encodes a finite difference approximation of the gradient $\mathrm{grad} = (\partial/\partial x^{(0)}, \partial / \partial x^{(1)})^T$ with $\numAncillaBits=2$ and $\scalingFactorBE=1/\sqrt{2}$ as
\begin{equation}
    U
    = \pmat{
    \tfrac{1}{\sqrt{2}} \DOnehScaledx & \ast &
    \\
    \tfrac{1}{\sqrt{2}} \DOnehScaledy & \ast & 
    \\
    \ast & \ast
    } \ ,
\end{equation}
where
\begin{equation}
    \DOnehScaledx
    = I \otimes \DOnehScaled \ 
    \qquad \text{and} \qquad
    \DOnehScaledy
    = \DOnehScaled \otimes I \ .
\end{equation}
This means
\begin{equation}
    U \ket{0}\ket{0}\ket{j}
    =
    \tfrac{1}{\sqrt{2}}\ket{0} \Bigl(
    \ket{0} \DOnehScaledx \ket{j}
    +
    \ket{1} \DOnehScaledy \ket{j}
    \Bigr)
    +  \ket{\perp}
    \ .
\end{equation}
\end{subequations}
Moreover, the circuit
\begin{subequations}
\begin{equation}
    \begin{array}{c}
     \Qcircuit @C=1em @R=.6em {
    \lstick{\ket{\jind{0}}} 
    & {/ \ \ } \qw \barrier[-1.2em]{3}
    & \qw 
    & \qw 
    & \gate{\ShiftMinus} 
    & \gate{\ShiftPlus} 
    & \qw 
    & \qw 
    & \qw \barrier[-1.2em]{3}
    & \qw
    & \qw 
    \\
    \lstick{\ket{\jind{1}}} 
    & {/ \ \ } \qw 
    & \qw 
    & \qw 
    & \qw 
    & \qw 
    & \gate{\ShiftMinus} 
    & \gate{\ShiftPlus} 
    & \qw
    & \qw 
    & \qw
    \\
    \lstick{\ket{k}} 
    & \qw
    & \qw 
    & \qw
    & \ctrlo{-2} 
    & \ctrlo{-2}
    & \ctrl{-1} 
    & \ctrl{-1}
    & \qw 
    & \gate{H} 
    & \qw
    \\
    \lstick{\ket{\ell}} 
    & \qw 
    & \gate{H} 
    & \gate{Z}
    & \ctrlo{-1} 
    & \ctrl{-1}
    & \ctrlo{-1} 
    & \ctrl{-1}
    & \gate{H}
    & \qw 
    & \qw
}
    \end{array}
\end{equation}
encodes the divergence $\mathrm{div} = (\partial/\partial x^{(0)}, \partial / \partial x^{(1)})$ with $\numAncillaBits=2$ and $\scalingFactorBE=1/\sqrt{2}$ as
\begin{equation}
    U
    = \pmat{
    \tfrac{1}{\sqrt{2}} \DOnehScaledx 
    & \tfrac{1}{\sqrt{2}} \DOnehScaledy 
    & \ast 
    &
    \\
    \ast & \ast & \ast
    \\
    \ast & \ast & \ast
    } \ .
\end{equation}
\end{subequations}
Finally, the circuit
\begin{subequations}
\begin{equation}
    \begin{array}{c}
     \Qcircuit @C=1em @R=.6em {
    \lstick{\ket{\jind{0}}} 
    & {/} \qw \barrier[-1.2em]{4}
    & \qw 
    & \qw 
    & \gate{\ShiftMinus} 
    & \gate{\ShiftPlus} 
    & \qw 
    & \qw 
    & \qw \barrier[-1.2em]{4}
    & \qw
    & \qw 
    \\
    \lstick{\ket{\jind{1}}} 
    & {/} \qw 
    & \qw 
    & \qw 
    & \qw 
    & \qw 
    & \gate{\ShiftMinus} 
    & \gate{\ShiftPlus} 
    & \qw
    & \qw 
    & \qw
    \\
    \lstick{\ket{k_1}} 
    & \gate{H}
    & \qw 
    & \qw
    & \ctrlo{-2} 
    & \ctrlo{-2}
    & \ctrl{-1} 
    & \ctrl{-1}
    & \qw 
    & \gate{H} 
    & \qw
    \\
    \lstick{\ket{k_0}} 
    & \ctrl{-1}
    & \qw 
    & \qw
    & \qw
    & \qw
    & \qw
    & \qw
    & \qw 
    & \ctrlo{-1} 
    & \qw
    \\
    \lstick{\ket{\ell}} 
    & \qw 
    & \gate{H} 
    & \gate{Z}
    & \ctrlo{-2} 
    & \ctrl{-2}
    & \ctrlo{-2} 
    & \ctrl{-2}
    & \gate{H}
    & \qw 
    & \qw
}
    \end{array}
\end{equation}
encodes the first order wave equation operator 
\begin{equation}
    \mathrm{W} 
    = \pmat{
    0 & 0 & \partial/\partial x^{(0)}
    \\
    0 & 0 & \partial / \partial x^{(1)}
    \\
    \partial/\partial x^{(0)} & \partial / \partial x^{(1)} & 0
    }
\end{equation}
with $\numAncillaBits=2$ and $\scalingFactorBE=1/\sqrt{2}$ as
\begin{equation}
    U
    = \pmat{
    0 
    & 0
    & \tfrac{1}{\sqrt{2}} \DOnehScaledx 
    & \ast
    \\
    0
    & 0
    & \tfrac{1}{\sqrt{2}} \DOnehScaledy 
    & \ast 
    \\
    \tfrac{1}{\sqrt{2}} \DOnehScaledx 
    & \tfrac{1}{\sqrt{2}} \DOnehScaledy
    & 0
    & \ast
    \\
    \ast & \ast & \ast & \ast
    } \ .
\end{equation}
\end{subequations}

\bibliography{literature}

\begin{thebibliography}{10}

\bibitem{Evans.2022}
L.~C. Evans.
\newblock ``Partial differential equations''.
\newblock Volume~19 of Graduate studies in mathematics.
\newblock {American Mathematical Society}. Providence, Rhode Island~(2022).
\newblock Second edition.

\bibitem{Smith.1993}
G.~D. Smith.
\newblock ``Numerical solution of partial differential equations: Finite
  difference methods''.
\newblock Oxford applied mathematics and computing science series. {Clarendon
  Press}. Oxford~(1993).
\newblock 3. ed., repr. (with corr.) edition.

\bibitem{Thomas.1995}
J.~W. Thomas.
\newblock ``Numerical partial differential equations: Finite difference
  methods''.
\newblock \href{https://dx.doi.org/10.1007/978-1-4899-7278-1}{Volume~22 of
  Springer eBook Collection Mathematics and Statistics}.
\newblock Springer. New York, NY~(1995).

\bibitem{Hairer.1991}
E.~Hairer and G.~Wanner.
\newblock ``Solving ordinary differential equations {II}''.
\newblock \href{https://dx.doi.org/10.1007/978-3-662-09947-6}{Volume~14}.
\newblock {Springer Berlin Heidelberg}. Berlin, Heidelberg~(1991).

\bibitem{Sogabe.2022}
T.~Sogabe.
\newblock ``Krylov subspace methods for linear systems''.
\newblock \href{https://dx.doi.org/10.1007/978-981-19-8532-4}{Volume~60}.
\newblock {Springer Nature Singapore}. Singapore~(2022).

\bibitem{Low.2019}
G.~H. Low and I.~L. Chuang.
\newblock ``Hamiltonian simulation by qubitization''.
\newblock \href{https://dx.doi.org/10.22331/q-2019-07-12-163}{Quantum {\bf 3},
  163}~(2019).

\bibitem{Gilyen.2019}
A.~Gily{\'e}n, Y.~Su, G.~H. Low, and N.~Wiebe.
\newblock ``Quantum singular value transformation and beyond: exponential
  improvements for quantum matrix arithmetics''.
\newblock In Proceedings of the 51st Annual ACM SIGACT Symposium on Theory of
  Computing.
\newblock \href{https://dx.doi.org/10.1145/3313276.3316366}{Pages 193--204}.
\newblock ACM Digital Library. {Association for Computing Machinery}~(2019).

\bibitem{Berry.2007}
D.~W. Berry, G.~Ahokas, R.~Cleve, and B.~C. Sanders.
\newblock ``Efficient quantum algorithms for simulating sparse
  {H}amiltonians''.
\newblock \href{https://dx.doi.org/10.1007/s00220-006-0150-x}{Communications in
  Mathematical Physics {\bf 270}, 359--371}~(2007).

\bibitem{Berry.2015}
D.~W. Berry, A.~M. Childs, and R.~Kothari.
\newblock ``Hamiltonian simulation with nearly optimal dependence on all
  parameters''.
\newblock In 2015 IEEE 56th Annual Symposium on Foundations of Computer Science
  (FOCS 2015).
\newblock \href{https://dx.doi.org/10.1109/FOCS.2015.54}{Pages 792--809}.
\newblock Piscataway, NJ~(2015). IEEE.

\bibitem{Childs.2017}
A.~M. Childs, R.~Kothari, and R.~D. Somma.
\newblock ``Quantum algorithm for systems of linear equations with
  exponentially improved dependence on precision''.
\newblock \href{https://dx.doi.org/10.1137/16M1087072}{SIAM Journal on
  Computing {\bf 46}, 1920--1950}~(2017).

\bibitem{Camps.2022}
D.~Camps, L.~Lin, R.~Van~Beeumen, and C.~Yang.
\newblock ``Explicit quantum circuits for block encodings of certain sparse
  matrices''.
\newblock \href{https://dx.doi.org/10.1137/22M1484298}{SIAM Journal on Matrix
  Analysis and Applications {\bf 45}, 801--827}~(2024).

\bibitem{Kharazi.2025}
T.~Kharazi, A.~M. Alkadri, J.-P. Liu, K.~K. Mandadapu, and K.~B. Whaley.
\newblock ``Explicit block encodings of boundary value problems for many-body
  elliptic operators''.
\newblock \href{https://dx.doi.org/10.22331/q-2025-06-04-1764}{Quantum {\bf 9},
  1764}~(2025).

\bibitem{Low.2017}
G.~H. Low and I.~L. Chuang.
\newblock ``Optimal {H}amiltonian simulation by quantum signal processing''.
\newblock \href{https://dx.doi.org/10.1103/PhysRevLett.118.010501}{Physical
  review letters {\bf 118}, 010501}~(2017).

\bibitem{Martyn.2021}
J.~M. Martyn, Z.~M. Rossi, A.~K. Tan, and I.~L. Chuang.
\newblock ``Grand unification of quantum algorithms''.
\newblock \href{https://dx.doi.org/10.1103/PRXQuantum.2.040203}{PRX Quantum
  {\bf 2}, 040203}~(2021).

\bibitem{Nielsen.2012}
M.~A. Nielsen and I.~L. Chuang.
\newblock ``Quantum computation and quantum information''.
\newblock \href{https://dx.doi.org/10.1017/CBO9780511976667}{{Cambridge
  University Press}}. ~(2012).

\bibitem{Rieffel.2014}
E.~Rieffel and W.~Polak.
\newblock ``Quantum computing: A gentle introduction''.
\newblock Scientific and engineering computation. {The {MIT} Press}. Cambridge,
  Massachusetts and London, England~(2014).
\newblock First {MIT} press paperback edition.

\bibitem{Lin.2022}
L.~Lin.
\newblock ``Lecture notes on quantum algorithms for scientific
  computation''~(2022).
\newblock  \href{http://arxiv.org/abs/2201.0830}{arXiv:2201.0830}.

\bibitem{Terhal.2015}
B.~M. Terhal.
\newblock ``Quantum error correction for quantum memories''.
\newblock \href{https://dx.doi.org/10.1103/RevModPhys.87.307}{Reviews of Modern
  Physics {\bf 87}, 307--346}~(2015).

\bibitem{Harrigan.2024}
M.~P. Harrigan, T.~Khattar, C.~Yuan, A.~Peduri, N.~Yosri, F.~D. Malone,
  R.~Babbush, and N.~C. Rubin.
\newblock ``Expressing and analyzing quantum algorithms with qualtran''~(2024)
  \href{http://arxiv.org/abs/2409.04643}{arXiv:2409.04643}.

\bibitem{Draper.2006}
T.~G. Draper, S.~A. Kutin, E.~M. Rains, and K.~M. Svore.
\newblock ``A logarithmic-depth quantum carry-lookahead adder''.
\newblock Quantum Info. Comput. {\bf 6}, 351–369~(2006).
\newblock  url:~\url{https://dl.acm.org/doi/abs/10.5555/2012086.2012090}.

\bibitem{Haner.2020}
T.~H{\"a}ner, S.~Jaques, M.~Naehrig, M.~Roetteler, and M.~Soeken.
\newblock ``Improved quantum circuits for elliptic curve discrete logarithms''.
\newblock In Jintai Ding and Jean-Pierre Tillich, editors, Post-Quantum
  Cryptography.
\newblock \href{https://dx.doi.org/10.1007/978-3-030-44223-1{\textunderscore
  }23}{Volume 12100 of Lecture Notes in Computer Science, pages 425--444}.
\newblock {Springer International Publishing}~(2020).

\bibitem{Sunderhauf.2024}
C.~S{\"u}nderhauf, E.~Campbell, and J.~Camps.
\newblock ``Block-encoding structured matrices for data input in quantum
  computing''.
\newblock \href{https://dx.doi.org/10.22331/q-2024-01-11-1226}{Quantum {\bf 8},
  1226}~(2024).

\bibitem{Camps_2022}
D.~Camps and R.~Van~Beeumen.
\newblock ``{FABLE}: Fast approximate quantum circuits for block-encodings''.
\newblock In 2022 IEEE International Conference on Quantum Computing and
  Engineering (QCE).
\newblock \href{https://dx.doi.org/10.1109/qce53715.2022.00029}{Page
  104–113}.
\newblock IEEE~(2022).

\bibitem{Kikuchi.2023}
Y.~Kikuchi, C.~{McKeever}, L.~Coopmans, M.~Lubasch, and M.~Benedetti.
\newblock ``Realization of quantum signal processing on a noisy quantum
  computer''.
\newblock \href{https://dx.doi.org/10.1038/s41534-023-00762-0}{npj Quantum
  Information {\bf 9}, 93}~(2023).

\bibitem{rullkotter2025}
L.~Rullkötter, S.~Weber, V.~M. Katukuri, C.~Tutschku, and B.~C. Mummaneni.
\newblock ``Resource-efficient variational block-encoding''~(2025).
\newblock  \href{http://arxiv.org/abs/2507.17658}{arXiv:2507.17658}.

\end{thebibliography}
\bibliographystyle{quantum}

\end{document}